\relax
\documentclass[letterpaper,12pt]{article} %
\usepackage{times}  %
\usepackage{helvet}  %
\usepackage{courier}  %
\usepackage[hyphens]{url}  %
\usepackage{graphicx} %
\urlstyle{rm} %
\usepackage{natbib}  %
\usepackage{caption} %
\DeclareCaptionStyle{ruled}{labelfont=normalfont,labelsep=colon,strut=off} %
\frenchspacing  %
\setlength{\pdfpagewidth}{8.5in}  %
\setlength{\pdfpageheight}{11in}  %
\usepackage{algorithm}
\usepackage{newfloat}
\usepackage{listings}
\usepackage{booktabs}
\lstset{%
	basicstyle={\footnotesize\ttfamily},%
	numbers=left,numberstyle=\footnotesize,xleftmargin=2em,%
	aboveskip=0pt,belowskip=0pt,%
	showstringspaces=false,tabsize=2,breaklines=true}
\floatstyle{ruled}
\newfloat{listing}{tb}{lst}{}
\floatname{listing}{Listing}

\usepackage{nicefrac}
\usepackage{amsmath} %
\usepackage{amsthm}
\usepackage{amsfonts}
\usepackage{algpseudocode}

\usepackage{fullpage}

\newcommand{\p}{\ensuremath{{\mathrm{P}}}}
\newcommand{\np}{\ensuremath{{\mathrm{NP}}}}
\newcommand{\conp}{\ensuremath{{\mathrm{coNP}}}}

\newcommand{\sharpp}{\ensuremath{{\mathrm{\#P}}}}

\newcommand{\sat}{{{\mathrm{sat}}}}

\newcommand{\sharplcg}{\ensuremath{\#\text{LCG}}}

\newcommand{\cohesivegroup}{\textsc{Cohesive-Group}}

\newcommand{\sharpfscg}{\ensuremath{\#\textsc{FSCG}}}

\newcommand{\sharpfsonecg}{\textsc{\#One-FSCG}}

\newcommand{\pdf}{\textsc{PD-Failure}}

\newcommand{\pdv}{\textsc{PD-Verification}}

\newcommand{\unitpdcommittee}{\textsc{Unit-PD-Committee}}

\newcommand{\sharpsetcover}{\textsc{\#Set-Cover}}

\newcommand{\sfamily}{\mathcal{S}}

\algnewcommand\And{\,\textbf{and}\,}
\algnewcommand\Or{\,\textbf{or}\,}
\algnewcommand{\LineComment}[1]{\State \(\triangleright\) #1}

\newtheorem{theorem}{Theorem}[section]
\newtheorem{proposition}[theorem]{Proposition}
\newtheorem{lemma}[theorem]{Lemma}
\newtheorem{corollary}[theorem]{Corollary}
\newtheorem{definition}{Definition}
\newtheorem{example}{Example}

\pdfinfo{
/Title (The Complexity of Proportionality Degree in Committee Elections)
/Author (Lukasz Janeczko, Piotr Faliszewski)
/TemplateVersion (2022.1)
}

\setcounter{secnumdepth}{2} %

\title{The Complexity of Proportionality Degree in Committee Elections}

\author{
        Łukasz Janeczko\\
        AGH University\\
        ljaneczk@agh.edu.pl
\and
        Piotr Faliszewski\\
        AGH University\\
        faliszew@agh.edu.pl
}

\bibliographystyle{abbrvnat}

\sloppy

\begin{document}

\maketitle

\begin{abstract}
  Over the last few years, researchers have put significant effort
  into %
  understanding of the %
  notion of proportional
  representation in committee election. In particular, recently they
  have proposed the notion of proportionality degree. %
  We
  study the complexity of computing committees with a given
  proportionality degree and of testing if a given committee
  provides a particular one. This way, we complement recent
  studies that mostly focused on the notion of (extended) justified
  representation.  We also study the problems of testing if a
  cohesive group of a given size exists and of counting such groups.
\end{abstract}

\section{Introduction}

If we consider a parliamentary election where about $45\%$ voters
support party~$A$, $30\%$ support party~$B$, and the remaining $25\%$
support party~$C$, then there are well-understood ways of assigning
seats to the parties
in a proportional manner. However, if instead of
naming the supported parties the voters can approve each candidate
individually, e.g., depending on some nonpartisan agendas, then the situation
becomes less clear.
While such free-form elections are not popular in political settings,
they do appear in the context of artificial
intelligence. For example, they can be used to organize search
results~\citep{sko-lac-bri-pet-elk:c:proportional-rankings}, assure
fairness in social
media~\citep{cha-pat-gan-gum-loi:c:fair-twitter-multiwinner}, help
design online Q\&A
systems~\citep{isr-bri:c:dynamic-proportional-rankings}, or suggest
movies to watch~\citep{gaw-fal:w:multiwinner-movies}.  As a
consequence, seeking formal understanding of proportionality in
multiwinner elections is among the most active branches of
computational social choice~\citep{abc-voting-axioms}.  In this paper
we extend this line of work by analyzing the computational complexity
of one of the recent measures of proportionality, the proportionality
degree, introduced by~\citet{complexity-of-ejr-and-pjr}.

We consider the model of elections where, given a set of candidates,
each of the $n$ voters specifies which candidates he or she approves, and the
goal is to choose a size-$k$ subset of the candidates, called the
winning committee. Since the committee is supposed to represent the
voters proportionally,
\citet{justified-representation-in-approval-voting} proposed the
following requirement: For each positive integer $\ell$ and each group
of $\ell \cdot \nicefrac{n}{k}$ voters who agree on at least $\ell$
common candidates, the committee should contain at least $\ell$
candidates approved by at least one member of the group. In other
words, such a group, known as an $\ell$-cohesive group, deserves at
least $\ell$ candidates, but it suffices that a single member of the
group approves them.  If this condition holds, then we say that the
committee provides \emph{extended justified representation} (provides
EJR; if we restrict our attention to $\ell = 1$, then we speak of
providing \emph{justified representation}, JR).
The key to the success of this notion is that
committees providing EJR always exist and are selected by voting rules
designed to provide proportional representation, such as the
proportional approval voting rule (PAV) of \citet{Thie95a}.  Many
researchers followed
\citet{justified-representation-in-approval-voting} either in defining
new variants of the justified representation axioms
\citep{proportional-justified-representation,pet-pie-sko:t:fjr} or in
analyzing and designing rules that would provide committees satsifying
these
properties~\citep{complexity-of-ejr-and-pjr,san-fer-fis-bri:c:new-dhondt,bri-fre-jan-lac:c:phragmen}. Others
study these notions experimentally~\citep{BFKN19}
or analyze the restrictions that they impose~\citep{lac-sko:j:util-welfare}.

Yet, JR and EJR are somewhat unsatisfying.  After all, to provide them
it suffices that a single member of each cohesive group approves
enough committee members, irrespective of all the other voters. To
address this issue, \citet{complexity-of-ejr-and-pjr} introduced the
notion of \emph{proportionality degree}~(PD). They said that the
satisfaction of a voter is equal to the number of committee members
that he or she approves and they considered average satisfactions of the
voters in cohesive groups. More precisely, they said that
a committee has PD~$f$, where~$f$ is a function from positive integers
to nonnegative reals, if for %
each $\ell$-cohesive group, the average satisfaction of the voters in
this group is at least~$f(\ell)$. So, establishing the PD of a rule
gives quantitative understanding of its proportionality, whereas JR
and EJR only give qualitative information.  That said,
\citet{complexity-of-ejr-and-pjr} did show that if a committee
provides EJR then it has PD at least $f(\ell) = \nicefrac{\ell-1}{2}$,
so the two notions are related. Further, they showed that PAV
committees have PD $f(\ell) = \ell-1$ and
\citet{pd-of-multiwinner-rules} established good bounds on the PDs of
numerous other proportionality-oriented rules. In particular, these
results allow one to order the rules according to their theoretical
guarantees, from those providing the strongest ones to those providing
the weakest.

We extend this line of work by studying the complexity of problems
pertaining to the proportionality degree:
\begin{enumerate}
\item We show that, in general, deciding if a committee with a given
  PD exists is both $\np$-hard and $\conp$-hard and we suspect it to
  be $\np^\np$-complete (but we only show membership). Nonetheless, we
  find the problem to be $\np$-complete for (certain) constant PD
  functions. These results contrast those for JR and EJR, for which
  analogous problems are in $\p$.

\item We show that verifying if a given committee provides a given PD
  is $\conp$ complete, which is analogous to the case of EJR. We also
  provide ILP formulations that may allow one to compute PDs in
  practice (thus, one could use them to establish an empirical
  hierarchy of proportionality among multiwinner voting rules).

\item We show that many of our problems are polynomial-time solvable
  for the candidate interval and voter interval domains of
  preferences~\citep{structure-in-dichotomous-preferences} and are
  fixed-parameter tractable for the parameterizations by the number of
  candidates or the number of voters.
\end{enumerate}
We also study the complexity of finding and counting cohesive
groups.

\section{Preliminaries}
For an integer $t$, we write $[t]$ to denote the set
$\{1, \ldots, t\}$.
An election $E = (C,V)$ consists of a finite set $C$ of candidates and
a finite collection $V$ of voters.
Each voter $v \in V$ is endowed with a set $A(v) \subseteq C$ of
candidates that he or she approves.
Analogously, for each candidate $c$ we write $A(c)$ to denote the set
of voters that approve $c$; value $|A(c)|$ is known as the approval
score of $c$.
The elections considered in the $A(\cdot)$ notation will always be clear
from the context.

\paragraph{Multiwinner Voting Rules.}
A multiwinner voting rule is a function that given an election
$E = (C,V)$ and a committee size $k \leq |C|$ outputs a family of
winning committees, i.e., a family of size-$k$ subsets of $C$. (While
in practice some form of tie-breaking is necessary, theoretical
studies usually disregard this issue.) Generally, we do not focus on
specific rules, but the following three provide appropriate context
for our discussions (we assume that $E = (C,V)$ is some election and
we seek a committee of size $k$):
\begin{enumerate}
\item Multiwinner Approval Voting (AV) selects size-$k$ committees whose
  members have highest total approval score. Intuitively, AV selects committees
  of individually excellent candidates.
  
\item The Approval-Based Chamberlin--Courant rule (CC) selects those
  size-$k$ committees that maximize the number of voters who approve
  at least one member of the committee. Originally, the CC rule was
  introduced by \citet{cha-cou:j:cc} and its approval variant was
  discussed, e.g., by
  \citet{pro-ros-zoh:j:proportional-representation} and
  \citet{bet-sli-uhl:j:mon-cc}. CC selects committees of diverse
  candidates, that \emph{cover} as many voters as possible.
\item Proportional Approval Voting (PAV) selects those size-$k$
  committees $S$ that maximize the value
  $\sum_{v \in V} {w(|S \cap A(v)|)}$, where for each natural number
  $t$ we have $w(t) = \sum_{j=1}^t \nicefrac{1}{j}$. PAV selects
  committees that, in a certain sense, represent the voters
  proportionally; see, e.g., the works of
  \citet{bri-las-sko:j:multiwinner-apportionment} and
  \citet{lac-sko:j:thiele-rules}. The rule is due to
  \citet{Thie95a}.
\end{enumerate}
AV, CC, and PAV are examples of the so-called Thiele rules
\citep{Thie95a,lac-sko:j:thiele-rules}, but there are also many other
rules, belonging to other families. For more details, we point the
reader to the survey of \citet{abc-voting-axioms}.  Classifying
multiwinner rules as focused on individual excellence, diversity, or
proportionality is due to
\citet{fal-sko-sli-tal:b:multiwinner-voting}.

\paragraph{(Extended) Justified Representation.}

Let $E$ be an election with $n$ voters and let $k$ be the committee
size.  For an integer $\ell \in [k]$, called the \emph{cohesiveness
  level}, we say that a group of voters forms an $\ell$-cohesive group
if (a)~the group consists of at least $\ell \cdot \nicefrac{n}{k}$
voters, and (b)~there are at least $\ell$ candidates approved by each
member of the group. Intuitively, $\ell$-cohesive groups are large
enough to demand representation by at least $\ell$ candidates (as they
include a large-enough proportion of the voters) and they can name
these $\ell$ candidates (as there are at least $\ell$ common
candidates that they approve). Thus many proportionality axioms focus
on satisfying such demands. In particular, we are interested in the
notions of (extended) justified representation, due to
\citet{justified-representation-in-approval-voting}.
\begin{definition}
  Let $E = (C,V)$ be an election, let $k$ be a committee size, and let
  $S$ be some committee:
  \begin{enumerate}
  \item We say that $S$ provides justified representation (JR) if each 1-cohesive group
    contains at least one voter who approves at least one member of
    $S$.
  \item We say that $S$ provides extended justified representation (EJR) if for each
    $\ell \in [k]$, each $\ell$-coheseive group contains at least one
    voter that approves $\ell$ members of $S$.
  \end{enumerate}
\end{definition}
Researchers also consider other proportionality axioms, such as the
notion of \emph{proportional justified reperesentation} (PJR), due to
\citet{proportional-justified-representation}, and the recently
introduced axiom of \emph{fully justified representation} (FJR), due
to \citet{pet-pie-sko:t:fjr}. JR is the weakest of these (in the sense
that if a committee satisfies any of the other ones then it also
provides JR), followed by PJR, EJR, and FJR.  We focus on JR and EJR
as they will suffice for our purposes.
For every election and every committee size there always exists at
least one committee providing EJR (thus, also JR).
Indeed, all CC committees provide JR and all PAV committees also
provide EJR~\citep{justified-representation-in-approval-voting},
but AV committees may fail to provide (E)JR.

\paragraph{Proportionality Degree.}
Our main focus is on the notion of a proportionality degree of a
committee, introduced by~\citet{complexity-of-ejr-and-pjr}.  Let us
consider some voter $v$ and a committee $S$. We define~$v$'s
satisfaction with $S$ as $|A(v) \cap S|$, i.e., the number of
committee members that $v$ approves.
\begin{definition}
  Let $E$ be an election, let $S$ be a committee of size $k$, and let
  $f \colon [k] \rightarrow \mathbb{R}$ be a function. We say that~$S$
  has proportionality degree $f$ if for each $\ell$-cohesive group of
  voters $X$ (where $\ell \in [k]$) the average satisfaction of the
  voters in $X$ is at least $f(\ell)$.
\end{definition}
In other words, if a committee has a certain proportionality degree
$f$ for a given election, then members of the cohesive groups in this
election are guaranteed at least a certain average level of
satisfaction.  We are interested in several special types of
proportionality degree (PD) functions:
\begin{enumerate}
\item We say that $f$ is a \emph{nonzero PD} %
  if $f(\ell) > 0$ for all $\ell$.
\item We say that $f$ is a \emph{unit PD} %
  if  $f(\ell) = 1$ for all $\ell$.
\item We say that $f$ is \emph{nearly perfect} PD if
  $f(\ell) = \ell-1$ for all $\ell$.
\item We say that $f$ is a \emph{perfect} PD if $f(\ell) = \ell$ for
  all $\ell$.
\end{enumerate}
One can verify that every CC committee (or, in fact, every JR
committee) has nonzero PD, and \citet{complexity-of-ejr-and-pjr} have
shown that every PAV committee has nearly perfect PD.  It is also
known that if a committee provides EJR then, at least, it has
proportionality degree $f$ such that $f(\ell) = \nicefrac{\ell-1}{2}$
\citep{proportional-justified-representation}.  Yet there exist
elections for which no committee has unit PD or perfect
PD~\citep{complexity-of-ejr-and-pjr}.  For a detailed analysis of
proportionality degrees of various multiwinner rules, we point to the
work of \citet{pd-of-multiwinner-rules}.  Finally, we note that by
saying that a certain rule has PD $f$ we only indicate a lower bound
on its performance. Consequently, for many rules we can say that they
provide several different proportionality degrees, as in the case of
PAV, which provides both nearly perfect PD and some nonzero PD
(nonzero PD offers a stronger guarantee than the nearly perfect one
for $\ell = 1$).

\paragraph{Computational Complexity.}
We assume knowledge of classic and parameterized computational
complexity theory, including classes $\p$ and $\np$, the notions of
hardness and completeness for a given complexity class, and FPT
algorithms. Occasionally, we also refer to the $\conp$ class and to
higher levels of the Polynomial Hierarchy. Given a problem $X$ from
$\np$, where we ask if a certain mathematical object exists, we write
$\#X$ to denote its counting variant, where we ask for the number of
such objects. Counting problems belong to the class $\sharpp$ and it
is commonly believed that if a counting problem is $\sharpp$-complete
then it cannot be solved in polynomial time. We mention that
$\sharpp$-completeness is defined via Turing
reductions~\citep{val:j:permanent} and not many-one reductions, as in
the case of $\np$-completeness. A counting problem $\#X$
Turing-reduces to a counting problem $\#Y$, denoted
$\#X \le_{fp}^{T} \#Y$, if there is an algorithm that solves $\#X$ in
polynomial time, provided it has access to $\#Y$ as an oracle (i.e.,
provided that it has a subroutine for solving $\#Y$ in constant time).

\paragraph{Computational Aspects of JR, EJR, and PD.}
There are polynomial-time algorithms that given an election and a
committee size compute committees which provide
JR~\citep{justified-representation-in-approval-voting} or
EJR~\citep{complexity-of-ejr-and-pjr}. There is also a polynomial-time
algorithm that given a committee verifies if it provides JR. The same
task for EJR is
$\conp$-complete~\citep{justified-representation-in-approval-voting}.
In this paper we answer analogous questions for the case of the
proportionality degree.

\section{Finding and Counting Cohesive Groups}
\label{sec:finding-counting}

As cohesive groups lay at the heart of JR, EJR, and PD, we start our
discussion by analyzing the hardness of finding them. More precisely,
we consider the following problem.

\begin{definition}
  An instance of the $\cohesivegroup$ problem consists of an election
  $E$, a committee size $k$, and a positive integer $\ell$. We ask if
  $E$ contains an $\ell$-cohesive group.
\end{definition}

Somewhat disappointingly, this problem is $\np$-complete. This follows
via a reduction inspired by that provided by
\citet{justified-representation-in-approval-voting} to show that
testing if a given committee provides EJR is $\conp$-complete (we
include the proof for the sake of completeness, as some of our further
hardness proofs follow by reductions from \textsc{Cohesive-Group}).

\begin{theorem}\label{thm:cg-np}
  \textsc{Cohesive-Group} is $\np$-complete
\end{theorem}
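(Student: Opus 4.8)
The plan is to establish membership in \np{} by a guess-and-check argument and then to prove \np-hardness by a reduction from \balancedbiclique{}, i.e., the problem of deciding, given a bipartite graph and an integer $\ell$, whether the graph contains a complete bipartite subgraph $K_{\ell,\ell}$; this is a classic \np-complete problem. Membership is immediate: a witness is simply a set $X$ of voters, and given $X$ one checks in polynomial time whether $|X| \ge \ell \cdot \nicefrac{n}{k}$ and whether $\big|\bigcap_{v \in X} A(v)\big| \ge \ell$, the latter by intersecting approval sets. Hence \cohesivegroup{} is in \np.

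For the hardness, I would take a \balancedbiclique{} instance consisting of a bipartite graph $G = (L, R, E)$ and a parameter $\ell$, and build an election $E' = (C,V)$ whose voters are the vertices of $L$ and whose candidates are the vertices of $R$, setting $A(v) = \{\, c \in R : \{v,c\} \in E \,\}$ for each $v \in L$. If $|R| < |L|$, I would pad $C$ with $|L|-|R|$ dummy candidates approved by nobody; since an empty-approval candidate can never lie in the common approval set of a nonempty group, these introduce no spurious cohesive groups, while guaranteeing $|C| \ge |L|$. I then set the committee size to $k = |L| = n$ and keep the cohesiveness level $\ell$ unchanged, noting that $\ell \le \min(|L|,|R|) \le k$, so $\ell \in [k]$ as required.

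The crux is the equivalence of the two instances. Because $n = k$, the size threshold collapses to $\ell \cdot \nicefrac{n}{k} = \ell$, so an $\ell$-cohesive group is exactly a set $X \subseteq L$ of at least $\ell$ voters sharing at least $\ell$ commonly approved candidates $Y \subseteq R$. Given such $X$ and $Y$, any choice of $\ell$ voters from $X$ together with $\ell$ candidates from $Y$ induces a $K_{\ell,\ell}$ in $G$; conversely, the two sides of any $K_{\ell,\ell}$ in $G$ directly form an $\ell$-cohesive group in $E'$. Thus $E'$ has an $\ell$-cohesive group if and only if $G$ contains $K_{\ell,\ell}$, and the reduction is clearly polynomial-time computable.

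The step I would be most careful about, and the main obstacle, is aligning the fractional size threshold $\ell \cdot \nicefrac{n}{k}$ of a cohesive group with the exact side-size $\ell$ of a balanced biclique. Forcing $n = k$ through the dummy-candidate padding makes this threshold integral and equal to $\ell$, and the apparent mismatch between the ``at least $\ell$'' conditions defining a cohesive group and the ``exactly $\ell$'' sides of $K_{\ell,\ell}$ is harmless, since any larger biclique contains a $K_{\ell,\ell}$ and any $K_{\ell,\ell}$ already witnesses a cohesive group.
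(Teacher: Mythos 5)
Your proposal is correct and follows essentially the same route as the paper's own proof: NP membership by guess-and-check, and NP-hardness via a reduction from \balancedbiclique{} that identifies voters with one side of the bipartition, candidates with the other, pads with dummy unapproved candidates, and sets the committee size to the number of voters so that the cohesiveness threshold collapses to the biclique parameter. The only differences are notational (you keep the biclique parameter as $\ell$ where the paper calls it $k$), and your explicit remark that the ``at least $\ell$'' versus ``exactly $\ell$'' mismatch is harmless matches the paper's ``focusing on exactly $k$ voters and $k$ candidates'' step.
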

\begin{proof}
  We observe that \textsc{Cohesive-Group} is in $\np$: Given an
  election $E$ with $n$ voters, committee size $k$, and cohesiveness
  level $\ell$, it suffices to nondeterministically guess a group of
  at least $\ell\cdot\nicefrac{n}{k}$ voters and check that the
  intersection of their approval sets contains at least $\ell$
  candidates.

  To show $\np$-hardness, we give a reduction from the $\np$-complete
  \textsc{Balanced-Biclique} problem~\citep{bbp-np-complete}. The
  input for \textsc{Balanced-Biclique} consists of a bipartite graph
  $G$ and a nonnegative integer $k$. The vertices of $G$ are
  partitioned into two sets, $L(G)$ and $R(G)$, and we write $E(G)$ to
  denote the set of $G$'s edges; each edge connects a vertex from
  $L(G)$ with a vertex from $R(G)$. We ask if there is a size-$k$
  subset of $L(G)$ and a size-$k$ subset of $R(G)$ such that each
  vertex from the former is connected with each vertex from the latter.
  Such two sets are jointly referred to as a $k$-biclique of $G$.

  Given an instance of \textsc{Balanced-Biclique}, we form an instance
  of \textsc{Cohesive-Group} as follows.  We construct an election
  $E'$, where $R(G)$ is the set of candidates and $L(G)$ is a
  collection of voters. A voter $\ell_i \in L(G)$ approves a candidate
  $r_j \in R(G)$ if $\ell_i$ and $r_j$ are connected in $G$.  We
  extend $E'$ by adding $\max(||L(G)| - |R(G)|,0)$ candidates not
  approved by any voter, we set the committee size to be $k' = |L|$,
  and we let the desired cohesiveness level be $\ell' = k$. This
  completes the construction.

  Note that each $\ell'$-cohesive group in our election consists of at
  least $\ell' \frac{|L|}{k'} = k \frac{|L|}{|L|} = k$ voters who
  approve at least~$k$ common candidates. Focusing on exactly $k$
  voters and $k$ candidates, we see that such a group exists if and
  only if $G$ has a $k$-biclique. This completes the proof.
\end{proof}

On the positive side,
\citet{justified-representation-in-approval-voting} gave a
polynomial-time algorithm for deciding if an election contains a
$1$-cohesive group (we refer to this variant of the problem as
\textsc{One-Cohesive-Group}): It suffices to check if there is a
candidate $c$ for whom $|A(c)| \geq \nicefrac{n}{k}$, where $n$ is the
total number of voters and $k$ is the committee size. If such a
candidate $c$ exists, then the voters from $A(c)$ form a $1$-cohesive
group; otherwise, there are no $1$-cohesive groups.

\begin{corollary}[\bf \citet{justified-representation-in-approval-voting}]
  \label{cor:1-cg-in-p}
  \textsc{One-Cohesive-Group} is in $\p$.
\end{corollary}

We complement the above results by considering the complexity of
\textsc{\#Cohesive-Group}, i.e., the problem of counting cohesive
groups.
If we had an efficient algorithm for this problem, then we could also
derive an efficient procedure for sampling cohesive groups uniformly
at random~\citep{jer-val-vaz:j:sampling-counting}, which would be
quite useful. Indeed, we could use it, e.g., to experimentally study
the distribution of cohesive groups in elections.\footnote{Formally,
  an approximate counting algorithm would suffice to obtain a nearly
  uniform sampling procedure. Our results do not preclude existence of
  such an algorithm, but we leave studies in this direction for future
  work.}  Naturally, \textsc{\#Cohesive-Group} is intractable---namely,
$\sharpp$-complete---since even deciding if a single cohesive group
exists is hard. More surprisingly, the same holds for 1-cohesive
groups.

\begin{theorem}\label{thm:count-1-cg}
  \textsc{\#One-Cohesive-Group} is $\sharpp$-complete
\end{theorem}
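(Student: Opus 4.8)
My plan starts with the easy direction. Membership in $\sharpp$ is immediate: a $1$-cohesive group is a subset $X$ of voters, which is a witness of polynomial size, and one verifies in polynomial time that $X$ has at least $\nicefrac{n}{k}$ voters and that $\bigcap_{v \in X} A(v) \neq \emptyset$; hence counting such witnesses lies in $\sharpp$. The interesting direction is hardness, and the phenomenon to exploit is that, although deciding existence of a $1$-cohesive group is in $\p$ (Corollary~\ref{cor:1-cg-in-p}), \emph{counting} these groups secretly encodes a hard counting problem. I would give a Turing reduction (the mode of reduction appropriate for $\sharpp$-completeness, as noted in the preliminaries) from the $\sharpp$-complete problem of counting independent sets, equivalently counting vertex covers, equivalently $\sharpclique$ in the complement graph.

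The key observation is structural: a nonempty group $X$ is $1$-cohesive at threshold $1$ precisely when its members share an approved candidate, i.e., when $X \subseteq A(c)$ for some candidate $c$. So counting $1$-cohesive groups amounts to counting nonempty voter-subsets contained in at least one approval set, and I would engineer the approval sets so that this becomes the \emph{complement} of a vertex-cover count. Concretely, given a graph $H=(U,F)$ with $F \neq \emptyset$, I build an election whose voters are the vertices $U$ and whose candidates correspond to the edges: voter $u$ approves candidate $c_e$ exactly when $u \notin e$, so that $A(c_e) = U \setminus e$. I add $\max(0,|U|-|F|)$ dummy candidates approved by nobody and set the committee size to $k = |U| = n$, which forces the cohesiveness threshold $\ell \cdot \nicefrac{n}{k}$ down to $1$. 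A nonempty $X \subseteq U$ then has a commonly approved candidate iff some edge $e$ satisfies $X \cap e = \emptyset$, i.e., iff $X$ fails to cover some edge of $H$. Hence the $1$-cohesive groups of the constructed election are exactly the nonempty subsets of $U$ that are \emph{not} vertex covers of $H$; the dummy candidates are harmless, since a nonempty group can never be commonly approved by a candidate liked by no one.

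This yields a clean arithmetic link. Writing $\#\mathrm{VC}(H)$ for the number of vertex covers of $H$ and noting that (as $F \neq \emptyset$) the empty set is a non-cover, the number of $1$-cohesive groups in the constructed instance equals $2^{n} - \#\mathrm{VC}(H) - 1$. A single oracle call therefore recovers $\#\mathrm{VC}(H) = 2^{n} - 1 - (\text{number of } 1\text{-cohesive groups})$; the degenerate case $F = \emptyset$ (where $\#\mathrm{VC}(H) = 2^n$) is handled directly. Since counting vertex covers is $\sharpp$-complete, so is \sharponecohesivegroup. I expect the main conceptual hurdle to be reconciling the easy decision version with the hard counting version: the construction must make ``being contained in some approval set'' coincide with the complement of a hard counting predicate, and the threshold must be pushed all the way to $1$ (via $k = n$ together with the dummy candidates) so that \emph{all} nonempty such subsets, rather than only the large ones, are counted.
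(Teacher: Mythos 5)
Your proof is correct, and it rests on the same central gadget as the paper's: complement the incidence relation (a voter approves a candidate exactly when the corresponding vertex is \emph{not} in the corresponding edge), pad with dummy candidates approved by nobody, and make the committee size equal to the number of voters so that the cohesiveness threshold $\ell \cdot \nicefrac{n}{k}$ collapses to $1$ and every nonempty group with a commonly approved candidate counts. The difference is the source problem and the resulting architecture. The paper reduces from $\sharpsetcover$, whose cardinality bound (covers using \emph{at most} $k$ sets) forces it to count cohesive groups stratified by their exact size; for this it introduces the intermediate problem $\sharpfsonecg$ and needs Proposition~\ref{red:12} (proved separately in the appendix) to transfer hardness back to $\sharponecohesivegroup$, recovering the set-cover count as $\sum_{x=1}^{k}\bigl(\binom{m}{x}-\sharplcg(E',1,x)\bigr)$. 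Your source problem, counting vertex covers (equivalently independent sets, a standard $\sharpp$-complete problem, by Provan and Ball's result and the complementation bijection), carries no cardinality bound, so the same gadget yields the clean identity $\#\mathrm{VC}(H)=2^{n}-1-\#(\text{$1$-cohesive groups})$ and a \emph{single} oracle call to the plain counting problem suffices---no fixed-size variant, no size stratification, no separate equivalence proposition. Your treatment of the boundary cases is exactly what is needed (the empty set is a non-cover when $F\neq\emptyset$, hence the $-1$; the case $F=\emptyset$ is answered directly; dummy candidates can never be commonly approved by a nonempty group), and your membership argument matches the paper's. In short, your route is a genuine simplification of the paper's proof, obtained by choosing a source problem without a cardinality constraint.
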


An intuition as to why finding a single $1$-cohesive group is easy but
counting them is hard is as follows. Using the argument from
Corollary~\ref{cor:1-cg-in-p}, for each candidate we can count (in
polynomial time) the number of $1$-cohesive groups whose members
approve this candidate. Yet, if we simply added these values, then
some groups could be counted multiple times. If we used the
inclusion-exclusion principle, then we would get the correct result,
but doing so would take exponentially many arithmetic operations.

To give a formal proof of Theorem~\ref{thm:count-1-cg}, we use the
following intermediate problem, which captures counting cohesive
groups that consist of a given number of voters.

\begin{definition}
  In the \textsc{\#Fixed-Size-Cohesive-Group} problem (the
  \textsc{\#FSCG} problem) we are given an election $E = (C, V)$, a
  committee size $k$, and two positive integers, $\ell$ and $x$. We
  ask how many $\ell$-cohesive groups that consist of exactly $x$
  voters are there in $E$.
\end{definition}

Given an instance $(E,k,\ell,x)$ of \textsc{\#FSCG}, by
$\sharplcg(E, k, \ell, x)$ we mean the number of $\ell$-cohesive
groups of size $x$ from election $E$ for committee size $k$. If we
omit parameter $x$, then we mean to total number of $\ell$-cohesive
groups, irrespective how many voters they include.  In the next
proposition we show that \textsc{\#FSCG} is computationally equivalent
to \textsc{\#CohesiveGroup} (the proof is in
Appendix~\ref{app:counting}).

\begin{proposition}\label{red:12}
  \textsc{\#Cohesive-Group} $\le_{\mathrm{T}}^{\mathrm{fp}}$ \textsc{\#FSCG}
  and
  \textsc{\#FSCG} $\le_{\mathrm{T}}^{\mathrm{fp}}$ \textsc{\#Cohesive-Group}.
\end{proposition}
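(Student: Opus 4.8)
The plan is to prove the two Turing reductions separately. First, \sharpcohesivegroup $\le_{\mathrm{T}}^{\mathrm{fp}}$ \sharpfscg is immediate: writing $n = |V|$, every $\ell$-cohesive group has some size $x \in \{1,\dots,n\}$, and groups of different sizes are distinct, so $\sharplcg(E,k,\ell) = \sum_{x=1}^{n}\sharplcg(E,k,\ell,x)$. Each summand is a single call to the \sharpfscg oracle, and there are at most $n$ of them, giving a polynomial-time reduction. (Summands with $x < \lceil \ell n/k\rceil$ are automatically $0$, so the threshold needs no special treatment.)

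For the converse, \sharpfscg $\le_{\mathrm{T}}^{\mathrm{fp}}$ \sharpcohesivegroup, I would fix an instance $(E,k,\ell,x)$ and, for each $s \ge 1$, let $g(s)$ be the number of voter subsets $X \subseteq V$ with $|X| \ge s$ and $|\bigcap_{v \in X} A(v)| \ge \ell$. Then the number of subsets of size exactly $x$ having at least $\ell$ common candidates equals $g(x) - g(x+1)$, and such a subset is an $\ell$-cohesive group precisely when $x \ge \lceil \ell n/k\rceil$ (otherwise there are none of size $x$). Hence it suffices to compute $g(x)$ and $g(x+1)$ and to return $g(x)-g(x+1)$ if $x \ge \lceil \ell n/k\rceil$ and $0$ otherwise.

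The crux, which I expect to be the main obstacle, is evaluating $g(s)$ for a prescribed $s$ with a single oracle query, since the oracle only counts groups above the threshold $\ell n'/k'$ determined by its own input. I would realize any target threshold by padding: given $s$, build an election $E_s$ that keeps all voters and candidates of $E$, adds $d = sT - n$ dummy voters approving no candidate, where $T = \lceil n/s\rceil$ so that $d \ge 0$, and adds dummy candidates approved by nobody until the candidate set has size at least $\ell T$; then set the committee size to $k' = \ell T$ and keep the level $\ell$. Now $E_s$ has $n' = sT$ voters and its cohesiveness threshold is exactly $\ell n'/k' = s$. Since any group containing a dummy voter has empty common approval and dummy candidates are common to no one, the $\ell$-cohesive groups of $E_s$ are exactly the subsets $X \subseteq V$ with $|X| \ge s$ and $|\bigcap_{v\in X} A(v)| \ge \ell$, so the \sharpcohesivegroup oracle on $(E_s,k',\ell)$ returns precisely $g(s)$. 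All padding is of polynomial size and the constraints $1 \le \ell \le k' \le |C'|$ and $d \ge 0$ hold by the choice of $T$, so two queries (at $s=x$ and $s=x+1$) finish the reduction.
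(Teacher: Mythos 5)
Your proof is correct and takes essentially the same route as the paper's: the first direction sums oracle answers over all possible group sizes (exactly the paper's Lemma~\ref{red:1}), and the second obtains the size-exactly-$x$ count as the difference of two counts at cohesiveness thresholds $x$ and $x+1$, each realized by padding the election with dummy voters and dummy candidates so that $\ell\cdot\nicefrac{n'}{k'}$ hits the desired value (the paper's Lemma~\ref{red:2}). The only cosmetic difference is that the paper pads once to $n\cdot x\cdot(x+1)$ voters and makes the two oracle calls with two different committee sizes $\ell n x$ and $\ell n (x+1)$, whereas you build two separately (and more lightly) padded elections, one per threshold; both versions use two oracle queries and the same counting identity.
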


We define problem \textsc{\#One-FSCG} by fixing $\ell = 1$ in the
definition of \textsc{\#FSCG}. Proposition~\ref{red:12} also holds for
the case of \textsc{\#One-FSCG} and \textsc{\#One-Cohesive-Group}
(indeed, we never modify~$\ell$ in the proposition's proof).  We use
the computational equivalence of \textsc{\#One-FSCG} and
\textsc{\#One-Cohesive-Group} to prove Theorem~\ref{thm:count-1-cg}.

\begin{proof}[Proof of Theorem~\ref{thm:count-1-cg}]
  A problem belongs to $\sharpp$ if its value can be expressed as the
  number of accepting paths of a polynomial-time nondeterministic
  Turing machine. For \textsc{\#One-Cohesive-Group} it suffices that
  such a machine guesses a group of voters, verifies if they form an
  $\ell$-cohesive group (which can be done deterministically in
  polynomial time), and accepts if so.  Hence,
  \textsc{\#One-Cohesive-Group} is in $\sharpp$.
  To show $\sharpp$-hardness of \textsc{\#One-Cohesive-Group}, we give
  a reduction from $\sharpsetcover$ to $\sharpfsonecg$.  The latter
  problem is well-known to be $\sharpp$-complete, and the former is
  computationally equivalent to \textsc{\#One-Cohesive-Group}.

  Let $(U, \sfamily, k)$ be an instance of the $\sharpsetcover$
  problem, where $U = \{u_1, \ldots, u_n\}$ is a universe,
  $\sfamily = \{S_1, \ldots, S_m\}$ is a family of subsets of $U$, and
  $k$ is a positive integer. The question is how many combinations of
  at most $k$ subsets from $\sfamily$ sum up to the universe $U$.  We
  create an instance of $\sharpfsonecg$ with an election
  $E' = (C', V')$, such that the candidates correspond to the elements
  of $U$ and the voters correspond to the elements of $\sfamily$
  (hence, we can speak both of a universe element $u_i$ and a
  candidate $u_i$, or of a set $S_j$ and voter $S_j$).  For each
  candidate $u_i$ and voter $S_j$, $u_i$ is approved by voter $S_j$ if
  element $u_i$ does not belong to the set $S_j$. Further, we extend
  $E'$ by adding $max(m-n, 0)$ new candidates not approved by
  anyone. Altogether, the number of candidates in $E'$ is at least
  $m$. We set the size of the final committee to be $k' = m$.  We also
  write $n'$ to denote the number of voters in $E'$; naturally, we
  have $n' = m$.  Due to the definition of a $1$-cohesive group, its
  size must be at least $\nicefrac{n'}{k'} = \nicefrac{m}{m} =
  1$. Thus, every group of voters that approve at least one common
  candidate is a $1$-cohesive group in our election.

  Let us consider a $1$-cohesive group of size $x' \le k$ and let us
  call it $T$. By definition, there is at least one candidate approved
  by all members of $T$. Let us call her $c'$. This means that $c'$ is
  not included in any set $S_j$ corresponding to the voters from $T$.
  Hence, the union of these sets is different from~$U$.
  On ther other hand, if a group $R$ of $x' \le k$ voters does not
  form an $1$-cohesive group, then the sets corresponding to the
  voters from this group do sum up to the universe~$U$. Indeed, for
  each candidate $c$ there is a voter in $R$ who does not approve $c$,
  which means that the corresponding set includes her.  As a
  consequence, each member of $U$ belongs to at least one set
  corresponding to a voter from $R$. 

  Above observations mean that families of subsets from $\sfamily$ sum
  up to the universe $U$ if and only if the voter groups that
  correspond to these families are not $1$-cohesive. Since for a
  positive integer $x$ there are
  ${|\sfamily| \choose x} = {m \choose x}$ size-$x$ families of sets
  from $\sfamily$, we conclude that the answer for our instance of
  \textsc{\#Set-Cover} is
  $\sum_{x=1}^{k} { ({|\sfamily| \choose x} - \sharplcg(E', 1, x)) }$.
  This completes the proof.
\end{proof}

\section{Computing a Committee with a Given PD}
\label{sec:pd}

In this section we focus on the complexity of deciding if a committee
with a given proportionality degree exists.  At first, this problem
may seem trivial as for each election there is a committee with nearly
perfect PD~\citep{complexity-of-ejr-and-pjr}.  Yet, we find that the
answer is quite nuanced. This stands in sharp contrast to analogous
decision questions for JR and EJR, which \emph{are} trivial (a
committee with the desired property always exists so the algorithm
always accepts).  Formally, we consider the following problem.
\begin{definition}
  In the \textsc{PD-Committee} problem we are given an election $E$, a
  committee size $k$, and a function $f \colon [k] \rightarrow \mathbb{Q}$,
  specified by listing its values.  We ask if $E$ has a size-$k$
  committee with proportionality degree at least $f$.
\end{definition}

We find that \textsc{PD-Committee} is both $\conp$-hard and
$\np$-hard. For the former result, we use the fact that for a given
$\ell$, the $f(\ell)$~value of a PD function is binding only if the
given election contains $\ell$-cohesive groups.

\begin{theorem}
  \textsc{PD-Committee} is both $\np$-hard and $\conp$-hard
\end{theorem}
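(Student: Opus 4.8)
The plan is to prove the two hardness claims by two independent reductions, since they exploit opposite features of the problem. For \conp-hardness I use the observation flagged before the statement: a value $f(\ell)$ is binding only when the election actually contains an $\ell$-cohesive group. For \np-hardness I instead force the committee to encode a covering solution and let the averaging condition detect whether that solution is valid.

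For \conp-hardness I would reduce from the complement of \cohesivegroup, which is \conp-complete by Theorem~\ref{thm:cg-np}. Given an instance $(E,k,\ell)$ of \cohesivegroup, I keep the same election $E$ and the same committee size $k$, and I define the demanded function by $f(\ell)=k+1$ and $f(\ell')=0$ for every $\ell'\neq\ell$. Since every voter approves at most $k$ committee members, the average satisfaction of any group is at most $k<k+1$, so the level-$\ell$ constraint can be met only vacuously, i.e.\ only when $E$ contains no $\ell$-cohesive group; all other levels impose $f(\ell')=0$, which every committee satisfies trivially. Hence a size-$k$ committee of proportionality degree at least $f$ exists if and only if $E$ has no $\ell$-cohesive group, which is exactly the desired reduction.

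For \np-hardness I would reduce from \rxthreec (restricted exact cover by $3$-sets), where the universe $U$ has $3q$ elements, every element lies in exactly three sets, and consequently the family $\sfamily$ has exactly $3q$ sets. I build an election whose candidates are the sets and whose voters are the elements, letting an element approve exactly the three sets that contain it; I set the committee size to $k=q$ and put $f(1)=1$ and $f(\ell)=0$ for $\ell\ge 2$. With $n=3q$ voters and $k=q$ the cohesiveness threshold is $n/k=3$, and one checks that a set of at least three voters has a common approved candidate precisely when it consists of the three elements of a single set $S_j$; thus the cohesive groups are exactly the $3q$ triples of elements of the sets $S_j$, each $1$-cohesive and none $\ell$-cohesive for $\ell\ge2$. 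Writing $d_i$ for the number of chosen sets containing $u_i$, the average satisfaction of the group for $S_j$ is $\tfrac13\sum_{u_i\in S_j}d_i$, so the demand $f(1)=1$ asks that $\sum_{u_i\in S_j}d_i\ge 3$ for every $S_j$.

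The key step, which I expect to be the crux, is that a size-$q$ committee must pick exactly $q$ sets, so $\sum_i d_i=3q$; summing $\sum_{u_i\in S_j}d_i$ over all $3q$ sets counts each $d_i$ three times and yields $9q$, so these $3q$ quantities average to exactly $3$. Requiring each to be at least $3$ therefore forces every one of them to equal $3$. For a chosen set $S_{j_0}$ the value $3$ is already accounted for by its own three elements, so $S_{j_0}$ must be disjoint from all other chosen sets; hence the $q$ chosen sets are pairwise disjoint and, covering $3q=|U|$ elements, form an exact cover, while conversely an exact cover makes every $d_i=1$ and every group-average $1$. The main obstacle is thus twofold: arranging the numbers of voters and candidates so that the threshold $n/k$ makes \emph{exactly} the set-triples cohesive (ruling out spurious groups, in particular at levels $\ell\ge2$), and turning the one-sided averaging inequality into the equality that pins the solution down. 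Once both reductions are in place they jointly establish that \textsc{PD-Committee} is \np-hard and \conp-hard.
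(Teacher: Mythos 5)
Your proposal is correct, and its two halves compare differently with the paper. The \np-hardness half is essentially the paper's own argument, which the paper defers to Theorem~\ref{thm:unit-pd}: the same \rxthreec{} reduction (sets as candidates, elements as voters, committee size $q$), the same identification of the cohesive groups with the $3q$ set-triples, and the same double-counting that pins every group's average satisfaction to exactly $1$ and forces the chosen sets to be pairwise disjoint, hence an exact cover. The \conp-hardness half is where you genuinely diverge. The paper also reduces from \cohesivegroup{} (\np-complete by Theorem~\ref{thm:cg-np}), but it keeps the demanded PD values inside the achievable range: it pads $E$ with $s\cdot k$ unapproved candidates and $(s-1)\cdot n$ empty voters (where $s$ is minimal with $s\cdot k>m$), sets the committee size to $k'=s\cdot k$ and $f(\ell)=k'$, and argues that every size-$k'$ committee must then contain a candidate approved by nobody, so average satisfaction $k'$ is unattainable, while the padding preserves $\ell\cdot\nicefrac{n'}{k'}=\ell\cdot\nicefrac{n}{k}$ and hence leaves the cohesive groups unchanged. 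Your version skips all of this by demanding $f(\ell)=k+1$, which no size-$k$ committee can deliver since each voter's satisfaction is at most $k$; this is legitimate because the paper's definition of \textsc{PD-Committee} places no upper bound on the listed values of $f\colon[k]\rightarrow\mathbb{Q}$. What you gain is a shorter, padding-free proof; what the paper's construction gains is robustness: it establishes hardness even under the natural normalization $f(\ell)\le k$ (i.e., demands that are achievable in principle), under which your reduction ceases to be an equivalence, since with $f(\ell)=k$ a committee could meet the demand when some $\ell$-cohesive group approves the entire committee. One minor point to make explicit: you need $k\le|C|$ (guaranteed by the paper's convention on committee sizes) so that size-$k$ committees exist at all in the ``no cohesive group'' case.
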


\begin{proof}
  We will show $\np$-hardness in Theorem~\ref{thm:unit-pd} and here we
  focus on $\conp$-hardness. To this end, we give a reduction from
  \textsc{Cohesive-Group} to the complement of \textsc{PD-Committee}.
  Let $(E, k, \ell)$ be our input instance, where $E = (C,V)$ is an
  election, $k$ is the committee size, and $\ell$ is the cohesiveness
  level. The question is if there exists an $\ell$-cohesive group for
  election $E$ with committee size $k$. For convenience, we set
  $n = |V|$, and $m = |C|$.

  We create an instance of the complement of \textsc{PD-Committee} as
  follows.  Let~$s$ be the smallest integer such that $s \cdot k > m$.
  We form an election $E'$ by first copying $E$ and then adding
  (a)~$s \cdot k$ new candidates who are not approved by any voters and
  (b)~$(s-1) \cdot n$ new voters who do not approve any
  candidates. Altogether, in $E'$ we have $n' = s \cdot n$ voters, and
  $m' = m + s \cdot k$ candidates. Further, we set the committee size to be
  $k' = s \cdot k$ and we let the PD function $f$ be such that for
  $i < \ell$ we have $f(i) = 0$ and for $i \geq \ell$ we have
  $f(i) = k'$.  This completes the construction.

  Note that the minimum size of an $\ell$-cohesive group in $E'$ is
  equal to the minimum size of an $\ell$-cohesive group in $E$,
  because
  $\frac{\ell \cdot n'}{k'} = \frac{\ell \cdot s \cdot n}{s \cdot k} =
  \frac{\ell \cdot n}{k}$. Thus every  $\ell$-cohesive group from $E$
  is also an $\ell$-cohesive group for $E'$ and vice versa.
  Further, each size-$k'$ committee %
  must contain at least one new candidate, because
  $k' = s \cdot k > m$. Yet, the new candidates are not approved by
  any voter and, so, if $E'$ has some $\ell$-cohesive group, then its
  average satisfaction must be strictly below $f(\ell) = k'$.  This
  means that if $E'$ has a committee with PD $f$ then there are no
  $\ell$-cohesive groups in $E'$ (and, thus, there are no cohesive
  groups in $E$).  In other words, the answer for the
  \textsc{PD-Committee} instance is ``yes'' if and only if the answer
  for the \textsc{Cohesive-Group} instance is ``no.''
  Since, by Theorem~\ref{thm:cg-np}, the latter is $\np$-complete, the
  former is $\conp$-hard.
\end{proof}

Since \textsc{PD-Committee} is both $\np$-hard and $\conp$-hard, it is
unlikely that it is complete for either of these classes (we would
have $\np = \conp$ if it were). Indeed, we suspect that it is complete for
$\np^\np$ and we show that it belongs to this class. An $\np^\np$-hardness result
remains elusive, unfortunately.

\begin{theorem}
  \textsc{PD-Committee} is in $\np^\np$.
\end{theorem}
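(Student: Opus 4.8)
The plan is to exploit the natural $\exists\forall$ quantifier structure of the problem: a committee with the desired proportionality degree exists if and only if there is a size-$k$ committee $S$ such that no cohesive group witnesses a violation of $f$. First I would have the $\np^\np$ machine nondeterministically guess a size-$k$ subset $S \subseteq C$; since $|S| = k \le |C|$, this guess has polynomial size and is produced by the outer nondeterministic computation.

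Next I would verify, using a single call to the oracle, that the guessed $S$ indeed has proportionality degree at least $f$. The key observation is that the \emph{negation} of this property lies in $\np$. Specifically, $S$ fails to have PD $f$ if and only if there is a cohesiveness level $\ell \in [k]$ and an $\ell$-cohesive group $X$ whose average satisfaction falls short of $f(\ell)$. A witness for this is the pair $(\ell, X)$, and all the required checks run in polynomial time: that $X$ is $\ell$-cohesive amounts to testing $|X| \ge \ell \cdot \nicefrac{n}{k}$ and $\bigl|\bigcap_{v \in X} A(v)\bigr| \ge \ell$, while the violation itself is the rational inequality $\frac{1}{|X|}\sum_{v \in X} |A(v) \cap S| < f(\ell)$, evaluated by computing the integer numerator $\sum_{v \in X}|A(v)\cap S|$ and comparing the resulting fraction against the listed value $f(\ell)$. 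Hence the property ``$S$ has PD at least $f$'' is in $\conp$.

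Putting the two pieces together, the outer machine guesses $S$, poses a single query to an oracle deciding this $\conp$ property, and accepts exactly when the oracle confirms that $S$ has PD at least $f$. This places \textsc{PD-Committee} in $\np^{\conp}$, and since $\np^{\conp} = \np^\np$ (an $\np$ oracle answers $\conp$ queries by complementation), we obtain the claimed membership. I expect no serious obstacle here: the only points needing care are confirming that the rational arithmetic in the average-satisfaction comparison stays polynomial, and noting the identity $\np^{\conp} = \np^\np$, both of which are routine.
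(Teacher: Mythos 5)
Your proposal is correct and follows essentially the same route as the paper's proof: nondeterministically guess the size-$k$ committee, then use a single oracle query to decide whether some $\ell$-cohesive group violates the bound $f(\ell)$, which is an $\np$ question (equivalently, your $\conp$ phrasing plus $\np^{\conp} = \np^{\np}$). If anything, your write-up is slightly more careful about the accept/reject logic---the paper's text says it accepts when the oracle reports a violating group exists, which is evidently a slip---but the underlying argument is identical.
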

\begin{proof}
  Consider an instance $(E,k,f)$ of \textsc{PD-Committee}. It is a
  ``yes''-instance exactly if there exists a size-$k$ committee such
  that for every $\ell \in [k]$, every $\ell$-cohesive group has
  average satisfaction at least $f(\ell)$. We can verify that this
  holds by first nondeterministically guessing the committee and then
  asking the oracle if there is a cohesive group for which the
  constrained implied by the PD function is failed (since computing an
  average satisfaction of a given cohesive group can be done in
  polynomial time, this task belongs to $\np$). We accept if the
  oracle answers ``yes'' and we reject otherwise.\footnote{Note that
    this way our nondeterministic machine makes only a single query to
    the oracle. While one could worry that this might mean that our
    problem is somehow ``easy'' for $\np^\np$, this is not the
    case. Indeed, it is well-known that every problem in $\np^\np$ can
    be solved by a nondeterministic machine with a single oracle
    call.}
\end{proof}

While \textsc{PD-committee} seems very hard in general, for some
classes of PD functions it is significantly easier. As an extreme
example, for nearly perfect ones it is trivially in $\p$ because PAV
winning committees always have nearly perfect PD.  We consider the
following restricted variants of \textsc{PD-Committee}: In
\textsc{Constant-PD-Committee} we require the desired PD functions to
be constant, in \textsc{Unit-PD-Committee} we require them to take
value $1$ for each argument, and in \textsc{Perfect-PD-Committee} we
require them to be perfect.
We find that both \textsc{Constant-PD-Committee} and
\textsc{Unit-PD-Committee} are $\np$-complete and, thus, likely much
easier than the general variant. To establish these results, it
suffices to show membership in $\np$ for the former and $\np$-hardness
for the latter.

\begin{theorem}\label{thm:const-pd-np}
  \textsc{Constant-PD-Committee} is in $\np$.
\end{theorem}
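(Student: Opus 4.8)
The plan is to use the standard guess-and-verify template: a nondeterministic machine first guesses a size-$k$ committee $S$, and then we must verify in deterministic polynomial time that $S$ really has the desired (constant) proportionality degree. In the general \textsc{PD-Committee} problem this verification step is precisely the $\conp$-hard part, which is why membership only lands in $\np^\np$. The whole point of restricting to constant functions is that verification should collapse to a polynomial-time check. So the crux is: given $S$ and a constant function $f \equiv c$ (with $c \in \mathbb{Q}$), decide in polynomial time whether every cohesive group has average satisfaction at least $c$.

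First I would observe that for a constant function the family of binding groups is exactly the family of $1$-cohesive groups. Indeed, if a group $X$ has at least $\ell \cdot \nicefrac{n}{k}$ voters and at least $\ell$ commonly approved candidates, then in particular it has at least $\nicefrac{n}{k}$ voters and at least one commonly approved candidate, so every $\ell$-cohesive group is automatically $1$-cohesive; conversely, $1$-cohesive groups are trivially in the family. Since the required threshold $f(\ell) = c$ does not depend on $\ell$, and since the average satisfaction of a group depends only on its voters, the condition ``$S$ has PD $f$'' is equivalent to ``every $1$-cohesive group has average satisfaction at least $c$.'' Verification thus reduces to computing the minimum average satisfaction over all $1$-cohesive groups and comparing it to $c$.

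To compute this minimum I would exploit the structure of $1$-cohesive groups. Write $t = \lceil \nicefrac{n}{k} \rceil$ and, for each voter $v$, let $\mathrm{sat}(v) = |A(v) \cap S|$ be its satisfaction. Every $1$-cohesive group has some commonly approved candidate $a$ and is therefore contained in $A(a)$; conversely, every subset of $A(a)$ of size at least $t$ is $1$-cohesive. Hence the minimum average satisfaction over all $1$-cohesive groups equals $\min_{a \in C} \min_{X \subseteq A(a),\, |X| \geq t} \frac{1}{|X|} \sum_{v \in X} \mathrm{sat}(v)$, and it suffices to solve the inner minimization for each fixed $a$ (ignoring those $a$ with $|A(a)| < t$, which admit no group). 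For a fixed $a$ I would claim that an optimal $X$ is obtained by sorting the voters of $A(a)$ by satisfaction in nondecreasing order and taking the $t$ voters of smallest satisfaction. The key technical step --- and the only place needing a genuine argument --- is to rule out groups larger than $t$: if $a_1 \leq a_2 \leq \cdots$ are the sorted satisfactions, then the prefix average $\frac{1}{s}\sum_{i=1}^{s} a_i$ is nondecreasing in $s$, since the added term $a_{s+1}$ is at least the current prefix average. Thus the minimum over all sizes at least $t$ is attained at size exactly $t$, and the inner value is just the average of the $t$ smallest satisfactions in $A(a)$.

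Putting the pieces together, for each of the $|C|$ candidates we sort and take a prefix in polynomial time, so the minimum average satisfaction over all $1$-cohesive groups --- and therefore the verification of whether $S$ has PD $f$ --- is computable in polynomial time. Guessing $S$ and running this check places \constantpdcommittee\ in $\np$. I expect the prefix-monotonicity observation to be the main obstacle, as it is exactly what eliminates the exponential search over admissible group sizes that makes the unrestricted problem hard.
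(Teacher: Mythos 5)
Your proposal is correct and follows essentially the same route as the paper: guess the committee, reduce the constant-PD check to $1$-cohesive groups (since every $\ell$-cohesive group is $1$-cohesive and the threshold is the same for all $\ell$), and for each candidate $c$ compute the average satisfaction of the $\lceil n/k \rceil$ least satisfied voters in $A(c)$, accepting iff the minimum over candidates meets the threshold. The only difference is that you spell out the prefix-average monotonicity argument ruling out groups larger than $\lceil n/k \rceil$, which the paper leaves implicit.
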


\begin{proof}
  Consider an instance $(E,k,f)$ of \textsc{Constant-PD-Committee},
  where $E = (C,V)$ is an election, $k$ is the committee size, and $f$
  is a constant PD function. Since $f$ is a constant function, there
  is a value $x$ such that for each $\ell \in [k]$ we have
  $f(\ell) = x$. To show that \textsc{Constant-PD-Committee} is in
  $\np$, we give a polynomial-time algorithm that given such an
  instance and size-$k$ committee $W$ verifies if~$W$ has PD $f$.

  Let $n = |V|$ be the number of voters.  For each candidate
  $c \in C$, we define $\sat(c)$ to be the average satisfaction of
  $\lceil\frac{n}{k}\rceil$ members of $A(c)$ that are least satisfied
  with $W$; if $A(c)$ contains fewer than $\frac{n}{k}$ voters then we
  set $\sat(c) = +\infty$. We set $y = \min_{c \in C}\sat(c)$. If
  $y = +\infty$ then election $E$ has no cohesive groups and $W$ has
  PD $f$ trivially. Otherwise, $y$ is the smallest average
  satisfaction that a $1$-cohesive group from~$E$ has for~$W$ (indeed,
  every $1$-cohesive group must have at least
  $\lceil \frac{n}{k} \rceil$ members and for each $c \in C$, each
  $1$-cohesive group whose members approve $c$ has satisfaction at
  least $\sat(c)$).  For each $\ell \in [k]$, each $\ell$-cohesive
  group also has satisfaction at least $y$ (each such group also is a
  $1$-cohesive group and, so, also has average satisfaction at least
  $y$).  Thus, if $y \geq x$ then we accept and otherwise we reject.
  This algorithm runs in polynomial time.
\end{proof}

\begin{theorem}\label{thm:unit-pd}
    $\unitpdcommittee$ is $\np$-hard
\end{theorem}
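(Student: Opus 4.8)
The plan is to reduce from $\setcover$, which is $\np$-complete, so the first step is to pin down what a unit PD actually demands. Since every $\ell$-cohesive group is also $1$-cohesive and $f(\ell)=1$ for all $\ell$, a committee provides unit PD iff every $1$-cohesive group has average satisfaction at least $1$. Moreover, exactly as in the proof of Theorem~\ref{thm:const-pd-np}, the binding groups are the cheapest ones: a committee $W$ fails unit PD iff there is a candidate $c$ for which the $\lceil n/k\rceil$ least-satisfied voters of $A(c)$ have average satisfaction below $1$ (here $k$ is the committee size and $n$ the number of voters). Rewriting satisfaction through the identity $\sum_{v\in X}|A(v)\cap W| = \sum_{c\in W}|A(c)\cap X|$, I can read the whole question as a covering task: choose $k$ candidates so that every cohesive group $X$ collects total satisfaction at least $|X|$ from the committee.

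For the construction, I would turn the sets of a $\setcover$ instance $(U,\sfamily,b)$ into candidates and attach to each universe element $u_i$ a block $X_i$ of identical voters approving exactly the set-candidates $s_j$ with $u_i\in S_j$, padding with dummy voters and unapproved candidates to fix the counts. The intended correspondence is that a committee made of set-candidates provides unit PD iff the chosen sets cover $U$: if $u_i$ is covered then every voter of $X_i$ approves a committee member and is satisfied, whereas an uncovered element leaves its block wholly unsatisfied. Because the voters inside a block approve the same candidates, each block has \emph{uniform} satisfaction, so the per-candidate characterization above collapses to a clean condition on the blocks, and the spanning groups $A(s_j)=\bigcup_{u_i\in S_j}X_i$ are automatically satisfied once each constituent block is (a union of disjoint groups of average $\geq 1$ again has average $\geq 1$). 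The committee size and the padding would be chosen so that selecting set-candidates is the only useful option.

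The step I expect to be the real obstacle is calibrating the block sizes, the total number of voters $n$, and the committee size so that the equivalence is \emph{exact}. There is a genuine tension here: for an uncovered element to be witnessed, its block must itself reach the cohesiveness threshold $\lceil n/k\rceil$ (recall the easy argument behind Corollary~\ref{cor:1-cg-in-p}), yet disjoint blocks force the committee size to grow with the number of elements, which threatens to make the covering question trivial. The subtle point to rule out is \emph{compensation}---a selection that leaves one element uncovered while over-covering the others, so that the averages of the spanning groups $A(s_j)$ stay above $1$ and the violation slips through undetected. Getting this right is where the construction must be tuned, most naturally by drawing the reduction from a \emph{uniform} source such as $\rxthreec$, where every element lies in the same number of sets and every set has the same size, so that the satisfaction budget is balanced and a single uncovered element necessarily drags some group's average below $1$; verifying that this balance forces an exact cover is the heart of the argument.
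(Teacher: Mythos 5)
Your setup is sound, and you in fact converge on the construction the paper uses: its proof reduces from \rxthreec{} with universe elements as voters, sets as candidates, approval given by membership, and committee size $k$ (one voter per element suffices; no blocks, no padding). But your proposal stops exactly where that proof has to start. You correctly identify the danger as compensation---a committee that leaves an element uncovered while all the groups $A(S_j)$ still average at least $1$---and you assert that the uniformity of \rxthreec{} rules this out, but ``verifying that this balance forces an exact cover is the heart of the argument'' is precisely the step you never carry out. Note also that in the one-voter-per-element construction an uncovered element does \emph{not} witness the failure by itself: a single voter is far below the cohesiveness threshold $\nicefrac{n}{k}=3$, so a violation can only surface through the three groups containing that voter, which is exactly why the compensation worry is real and needs an argument rather than an appeal to ``balance.''

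The missing argument is a double count. In the \rxthreec{} election the $1$-cohesive groups are exactly the $3k$ triples $A(S_j)$ (each candidate has exactly $3$ approvers and $3$ is the group-size threshold), and every voter lies in exactly $3$ of them. Hence the sum $B$ of the groups' total satisfactions equals $3\sum_{v}|A(v)\cap W| = 3\sum_{c\in W}|A(c)| = 9k$ for \emph{every} size-$k$ committee $W$, covered or not. If $W$ has unit PD, each of the $3k$ groups has total satisfaction at least $3$, and since these totals sum to exactly $9k$, every group's total is exactly $3$. Then for a committee member $S_j$, each of its three approvers contributes at least $1$ to the total of group $A(S_j)$, hence exactly $1$, i.e., $S_j$ is the only committee member that voter approves. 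So the approver triples of committee members are pairwise disjoint, and $k$ disjoint triples inside $3k$ voters form an exact cover. Without this (or an equivalent) argument your reduction establishes only the easy direction, so as it stands the proposal has a genuine gap; your original \setcover{} variant has the same hole, compounded by the calibration issues you yourself point out.
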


\begin{proof}
  We give a reduction from a variant of the classic \textsc{X3C}
  problem, which we call \textsc{RX3C} and which is well-known to be
  $\np$-complete~\citep{rx3c-np-complete}: An instance of
  \textsc{RX3C} consists of a universe set
  $U = \{u_1, u_2, ... , u_{3k}\}$ and a family
  $\sfamily = \{S_1, S_2, ... , S_{3k}\}$ of size-$3$ subsets of $U$,
  each element from $U$ belongs to exactly three sets from $\sfamily$,
  and we ask if there exist $k$ subsets from $\sfamily$ which sum up
  to the universe $U$.

  We form an instance of \textsc{Unit-PD-Committee} with an election
  $E$, committee size $k$, and unit PD function. We let the sets from
  $\sfamily$ be the candidates in $E$, and we let the universe
  elements be the voters.  A voter $u_i$ approves a candidate $S_j$ if
  $u_i \in S_j$. This completes the construction.

  We note that all cohesive groups in $E$ contain exactly three voters
  and have cohesiveness level one. This holds because each candidate
  is approved by exactly three voters and this is also the lower bound
  on the size of $1$-cohesive groups in $E$ (indeed,
  $\nicefrac{3k}{k}=3$).

  It is clear that if there exist $k$ subsets from $\sfamily$ which
  sum up to $U$, then the corresponding candidates form a committee
  which has average satisfaction at least $1$. Indeed, for each voter
  there is at least one candidate in the committee that he or she
  approves (in fact, exactly one). Otherwise the selected sets would
  not sum up to $U$. As a consequence, the average satisfaction of
  each \hbox{($1$-)}cohesive group with the committee is at least $1$.

  Next, let us show that if there exists a committee $W$ of size $k$
  such that each cohesive group has average satisfaction at least $1$,
  then there is  a collection $T$ of $k$ sets from $\sfamily$ that sum
  up to~$U$ (i.e., there is an exact cover of $U$).
  Let $B$ be the sum of the total satisfactions of all the $3k$
  $1$-cohesive groups in $E$. Since each $1$-cohesive group has
  average satisfaction at least one, its total satisfaction is at
  least $3$. There are $3k$ such groups, so we have that $B$ is at
  least $9k$. Moreover, $B$ is equal to $9k$ exactly if each
  $1$-cohesive group has average satisfaction equal to $1$.  However,
  each committee member is approved by exactly three voters, and each
  of these voters belongs to exactly three $1$-cohesive groups.  Hence
  $B = 9k$ and each $1$-cohesive group has average satisfaction equal
  to $1$.

  Consider some set $S_j = \{u_{j_1}, u_{j_2}, u_{j_3}\}$ such that
  candidate $S_j$ is a member of committee $W$.  Naturally,
  $\{u_{j_1}, u_{j_2}, u_{j_3}\}$ is a $1$-cohesive group, all its
  member approve $S_j$, and, so, its average satisfaction is at least
  $1$.  Indeed, by previous discussion we know that it is exactly $1$.
  Hence, for each voter in $\{u_{j_1}, u_{j_2}, u_{j_3}\}$, candidate
  $S_j$ is the only member of $W$ that he or she approves. If we repeat this
  reasoning for every member of $W$, we find that each of them is
  approved by exactly three voters and no two of them are approved by
  the same voters. This means that $W$ corresponds to an exact cover
  of $U$. The proof is complete.
\end{proof}

\begin{corollary}
  Both \textsc{Constant-PD-Committee} and \textsc{Unit-PD-Committee}
  are $\np$-complete.
\end{corollary}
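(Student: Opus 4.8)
The plan is to combine Theorems~\ref{thm:const-pd-np} and~\ref{thm:unit-pd} with the single structural observation that a unit PD function is a special case of a constant PD function, namely the constant function taking the value~$1$ everywhere. Consequently, every instance of \textsc{Unit-PD-Committee} is simultaneously a (legal) instance of \textsc{Constant-PD-Committee}. This containment of instance classes is the engine of the whole argument: membership in \np{} flows \emph{downward} from the constant variant to the unit variant, while \np-hardness flows \emph{upward} from the unit variant to the constant variant.

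First I would settle \textsc{Unit-PD-Committee}. Membership in \np{} is immediate, since a unit instance is in particular a constant instance, so the polynomial-time verifier from the proof of Theorem~\ref{thm:const-pd-np} (computing $\sat(c)$ for each candidate, taking the minimum $y$, and comparing $y$ against the constant value $x$, here $x=1$) applies without change. Hardness is exactly the content of Theorem~\ref{thm:unit-pd}. Putting these together yields \np-completeness of \textsc{Unit-PD-Committee}.

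Next I would settle \textsc{Constant-PD-Committee}. Membership in \np{} is precisely Theorem~\ref{thm:const-pd-np}. For hardness, I would observe that the reduction from \textsc{RX3C} in the proof of Theorem~\ref{thm:unit-pd} already produces instances whose target PD function is constant (it is the unit function), so that very reduction is simultaneously a valid many-one reduction to \textsc{Constant-PD-Committee}. Hence \textsc{Constant-PD-Committee} is \np-hard, and therefore \np-complete.

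There is essentially no technical obstacle: the corollary is a repackaging of the two preceding theorems, and the preamble sentence ``it suffices to show membership in \np{} for the former and \np-hardness for the latter'' already encodes this. The only point that needs a line of care is verifying that the class of unit instances genuinely embeds into the class of constant instances in both directions of transfer, i.e., that neither the verifier of Theorem~\ref{thm:const-pd-np} nor the reduction of Theorem~\ref{thm:unit-pd} relies on any property of the PD function beyond its being constant. Both facts are clear by inspection of the respective proofs, so the corollary follows.
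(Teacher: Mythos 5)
Your proposal is correct and follows exactly the paper's intended argument: membership in $\np$ for \textsc{Constant-PD-Committee} (Theorem~\ref{thm:const-pd-np}) transfers down to \textsc{Unit-PD-Committee} because unit instances are constant instances, while the $\np$-hardness reduction of Theorem~\ref{thm:unit-pd} transfers up because it outputs instances with a constant (unit) PD function. The paper itself states this decomposition in the sentence preceding the two theorems, so there is nothing to add.
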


As all the cohesive groups in the election constructed in the proof of
Theorem~\ref{thm:unit-pd} have cohesiveness level $1$, we have a
stronger result: Given a PD function $f$ such that $f(1) = 1$, it is
$\np$-hard to decide if there is a committee with proportionality
degree $f$.  In particular, we have the next corollary.

\begin{corollary}
  \textsc{Perfect-PD-Committee} is $\np$-hard.
\end{corollary}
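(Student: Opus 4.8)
The plan is to reuse, essentially verbatim, the reduction from \textsc{RX3C} constructed in the proof of Theorem~\ref{thm:unit-pd}, this time targeting \textsc{Perfect-PD-Committee} with the perfect PD function $f(\ell) = \ell$. The one structural fact I would lean on is the observation already established there: in the election produced from an \textsc{RX3C} instance every candidate is approved by exactly three voters, and since $\nicefrac{3k}{k} = 3$, every cohesive group is a $1$-cohesive group consisting of exactly three voters. No cohesive group of level $\ell \ge 2$ can arise, since such a group would require $3\ell \ge 6$ voters sharing at least $\ell \ge 2$ commonly approved candidates, whereas each candidate has approval score only three.

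First I would note that, on these particular instances, the only binding value of any PD function is its value at $\ell = 1$: the constraints imposed by $f(\ell)$ for $\ell \ge 2$ are vacuously satisfied because there are no $\ell$-cohesive groups to constrain. Since the perfect PD function satisfies $f(1) = 1$, requiring a size-$k$ committee on such an election to have perfect PD is \emph{identical} to requiring it to have unit PD. Consequently, the equivalence established in Theorem~\ref{thm:unit-pd}---that the constructed election admits a size-$k$ committee of unit PD if and only if the original \textsc{RX3C} instance has an exact cover---transfers unchanged to the perfect PD setting.

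Since the construction is computable in polynomial time and \textsc{RX3C} is $\np$-complete, this yields $\np$-hardness of \textsc{Perfect-PD-Committee}. There is no real obstacle beyond confirming that no higher-level cohesive groups are inadvertently created by the construction; this I would verify directly from the approval-score-three property, which rules out even two voters sharing two common approved candidates, let alone the six voters a level-$2$ group would demand. In fact, the same argument shows that the result holds for \emph{any} PD function $f$ with $f(1) = 1$, which is exactly the strengthening remarked upon in the text preceding this corollary.
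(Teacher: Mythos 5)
Your proposal is correct and takes essentially the same route as the paper: the paper derives this corollary exactly from the observation that all cohesive groups in the Theorem~\ref{thm:unit-pd} construction have cohesiveness level $1$, so only $f(1)$ is binding and any PD function with $f(1)=1$ (in particular the perfect one) inherits $\np$-hardness. One minor slip worth noting: the approval-score-three property does \emph{not} rule out two voters sharing two common approved candidates (two \textsc{RX3C} sets may intersect in two elements); what it rules out is any group of six or more voters sharing even a single candidate, which is the part of your argument that actually carries the claim that no $\ell$-cohesive group with $\ell \geq 2$ exists.
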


We can extend Theorem~\ref{thm:unit-pd} to work for any positive
integer constant $x$ and functions $f$ such that $f(1) = x$.  For
example, for $x = 2$ it suffices to extend the constructed election
with three voters that do not approve anyone and with a single
candidate who is approved by all the other voters.
It would also
be interesting to consider functions $f$ such that $f(1)$ is a
constant between $0$ and $1$, but we leave it for future work.
The above results are nicely aligned with existing polynomial-time
algorithms for computing committees with guarantees on their PD. For
example, there are polynomial-time algorithms for computing EJR
committees, and EJR committees are guaranteed to have PD $f$ such that
$f(\ell) =
\frac{\ell-1}{2}$~\citep{proportional-justified-representation}. As we
see, $f(1) = 0$ (though this could be improved very
slightly\footnote{Since the committee provides EJR, and thus JR, this
  zero could be replaced by $\frac{1}{\nicefrac{n}{k}} = \frac{k}{n}$,
  where $n$ is the number of voters and $k$ is the committee
  size. This follows from the fact that in each $1$-cohesive group of
  size $\frac{n}{k}$ there is at least one candidate who approves at
  least one voter.}).  As we have shown, extending the algorihtm to
find committees with PD functions $f$ such that $f(1) = 1$ (whenever
such committees exist) would not be possible in polynomial time
(assuming $\p \neq \np$).

\section{Computing the PD of a Given Committee}
\label{sec:computing-pd}

Sometimes, instead of computing a committee with a specified PD, we
would like to establish the PD of an already existing one. For
example, this would be the case if we wanted to experimentally compare
how well the committees provided by various voting rules represent the
voters.

One way to proceed would be as follows: For a given election~$E$ and
committee~$W$, consider each cohesiveness level $\ell$ and, using
binary search, find value $f(\ell)$, $0 \leq f(\ell) \leq |W|$, such
that each $\ell$-cohesive group has average satisfaction at least
$f(\ell)$, but for every $\varepsilon > 0$ there exists an
$\ell$-cohesive group with average satisfaction below
$f(\ell)+\varepsilon$ (or there are no $\ell$-cohesive groups in this
election).  Using binary search to compute this value is possible
because in an election with $n$ voters and committee size $k$, there
are at most $O(kn^2)$ different average satisfaction values of
cohesive groups (each cohesive group can have total satisfaction
between $0$ and $nk$, and each cohesive group can have at most $n$
voters).  Running such binary search requires the ability to solve the
following problem.

\begin{definition}
  In the \textsc{PD-Failure} problem we are given an election~$E$, a
  committee~$W$, a cohesiveness level~$\ell$, and a nonnegative
  rational threshold $y \le k$. We ask if $E$ contains an
  $\ell$-cohesive group whose average satisfaction for $W$ is lower
  than $y$.
\end{definition}

As one may expect, this problem is $\np$-complete. Membership in $\np$
follows by nondeterministically guessing an $\ell$-cohesive group and
checking if its average satisfaction is below $y$. To show
$\np$-hardness, we note that setting $y$ to an impossible-to-achieve
value makes the problem equivalent to testing if an $\ell$-cohesive
group exists.

\begin{theorem}\label{thm:pdfailure}
  \textsc{PD-Failure} is $\np$-complete.
\end{theorem}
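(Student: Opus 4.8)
The plan is to prove both membership in $\np$ and $\np$-hardness for \textsc{PD-Failure}, as the theorem statement and surrounding text already sketch.

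For membership, I would observe that a ``yes''-instance is certified by exhibiting a single $\ell$-cohesive group whose average satisfaction for $W$ is below $y$. A nondeterministic machine guesses a subset $X \subseteq V$, then in polynomial time verifies that (a)~$|X| \ge \ell \cdot \nicefrac{n}{k}$, (b)~$|\bigcap_{v \in X} A(v)| \ge \ell$, so that $X$ is genuinely $\ell$-cohesive, and (c)~$\frac{1}{|X|}\sum_{v \in X} |A(v) \cap W| < y$. Each check is a direct polynomial-time computation over the explicitly listed approval sets, so \textsc{PD-Failure} is in $\np$.

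For hardness, I would reduce from \textsc{Cohesive-Group}, which is $\np$-complete by Theorem~\ref{thm:cg-np}. Given an instance $(E,k,\ell)$ of \textsc{Cohesive-Group}, I would output the \textsc{PD-Failure} instance $(E, W, \ell, y)$ where $W$ is an arbitrary size-$k$ committee and the threshold $y$ is set just above the maximum possible average satisfaction --- concretely $y = k + 1$ (or any value strictly exceeding $k$), which is unachievable since every voter approves at most $|W| = k$ committee members and hence every group's average satisfaction is at most $k < y$. With such a $y$, \emph{every} $\ell$-cohesive group trivially has average satisfaction below $y$, so $E$ contains an $\ell$-cohesive group with average satisfaction below $y$ if and only if $E$ contains an $\ell$-cohesive group at all. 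Thus the \textsc{PD-Failure} instance is a ``yes''-instance exactly when the \textsc{Cohesive-Group} instance is, completing the reduction.

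This argument is essentially routine given Theorem~\ref{thm:cg-np}, and I do not anticipate a genuine obstacle. The one point requiring a little care is the threshold constraint $y \le k$ stated in the definition of \textsc{PD-Failure}: if the problem genuinely forbids $y > k$, I would instead set $y = k$ and note that an average satisfaction of exactly $k$ requires every voter in the group to approve all $k$ committee members, which I can preclude by a trivial padding of $E$ (e.g.\ adding one dummy candidate to $W$ that no voter approves, forcing every voter's satisfaction strictly below $k$); then ``average satisfaction below $k$'' again becomes automatic for any group, and the equivalence with \textsc{Cohesive-Group} goes through unchanged.
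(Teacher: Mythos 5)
Your final construction (the $y=k$ fallback, which is the only variant respecting the definition's constraint $y \le k$) is correct and essentially the paper's own proof: the paper likewise reduces from \textsc{Cohesive-Group} with threshold $y=k$, except that it makes \emph{all} $k$ committee members fresh unapproved candidates, so every cohesive group has average satisfaction $0 < k$. Just be careful that your dummy candidate \emph{replaces} an ordinary committee member rather than enlarging $W$ to size $k+1$---the committee size must remain exactly $k$, since otherwise the cohesiveness threshold $\ell \cdot \nicefrac{n}{k}$ changes and the backward direction of the equivalence breaks.
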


\begin{proof}
  Membership in $\np$ was already argued in the paragraph above
  the theorem statement.
  To prove $\np$-hardness, we show a reduction from $\cohesivegroup$
  to $\pdf$. Let $(E, k, \ell)$ be an instance of the $\cohesivegroup$
  problem, where $E$ is an election consisting of candidates $C$ and
  voters $V$, $k$ is the size of the final committee, and $\ell$ is
  the cohesiveness level. We ask if there exists an $\ell$-cohesive
  group for election $E$ with committee size~$k$.

  To create a $\pdf$ instance, we use the same election $E$, the same
  committee size $k$, and the same cohesiveness level $\ell$, but we
  add $k$ fresh candidates who are not approved by any voter and
  select them to the committee $W'$.  Further, we set the PD threshold
  $y=k$. From the above observation we conclude that if there exists
  any valid $\ell$-cohesive group, then its average satisfaction with
  $W'$ is $0$, which is strictly less than $y$. Therefore if there
  exists an $\ell$-cohesive group with average satisfaction lower than
  $y$, then this group is also an $\ell$-cohesive group for the
  $\cohesivegroup$ instance. Further, if there are no $\ell$-cohesive
  groups for the $\pdf$ instance, then the $\cohesivegroup$ instance
  doesn't have any $\ell$-cohesive groups either.

  Since the $\cohesivegroup$ problem is $\np$-complete, the $\pdf$
  problem is in $\np$ and we reduced the $\cohesivegroup$ problem to
  the $\pdf$ problem, the $\pdf$ problem is also $\np$-complete.
\end{proof}

\subsection{ILP Formulation}
Fortunately, in practice we may be able to solve instances of our
problem by expressing them as integer linear programs (ILPs) and
solving them using off-the-shelf software.  Specifically, let us
consider an instance of \textsc{PD-Failure} with election~$E = (C,V)$,
committee~$W$, cohesiveness level~$\ell$, and threshold~$y$.  We set
$m = |C|$, $n = |V|$, and $k = |W|$.  For convenience, let $A$ be the
binary matrix of approvals for~$E$, that is, we have $a_{ij} = 1$ if
the $i$-th voter approves the $j$-th candidate, and we have
$a_{ij} = 0$ otherwise.  We note that if there is an $\ell$-cohesive
group $X$ whose satisfaction for $W$ is below $y$, then there is such
a group of size exactly $s = \lceil \ell \cdot \nicefrac{n}{k} \rceil$
(e.g., consider $X$ and remove sufficiently many voters who approve
the most members of~$W$).

To form our ILP instance, we first specify the variables:
\begin{enumerate}
\item For each $i \in [n]$, we have a binary variable $x_i$, with the
  intention that $x_i = 1$ if the $i$-th voter is included in the sought
  cohesive group, and $x_i = 0$ otherwise. 
\item For each $j \in [m]$, we have a binary variable $y_i$, with the
  intention that $y_j = 1$ if all the voters in the group specified by
  variables $x_1, \ldots, x_n$ approve the $j$-th candidate, and
  $y_j = 0$ otherwise.
\end{enumerate}
We refer to the voters (to the candidates) whose $x_i$ ($y_j$) variables
are set to $1$ as \emph{selected}. Next, we specify the
constraints. Foremost, we %
ensure that we select exactly $s$
voters and  at least $\ell$ candidates:
\begin{align*}
  & \textstyle \sum_{i=1}^{n} {x_i} = s, & \text{and}& &\textstyle \sum_{j=1}^{m} {y_j} \ge \ell.
\end{align*}
Then, we %
ensure that each selected voter %
approves all the selected candidates.  For each $j \in [m]$, we form
constraint:
\[
  \textstyle \sum_{i=1}^{n} {a_{ij} \cdot x_i} \ge s \cdot y_j.
\]
If the $j$-th candidate is not selected, then this inequality is
satisfied trivially.  However, if the $j$-th candidate is selected,
then the sum on the left-hand side must be at least $s$, i.e., there
must be at least $s$ selected voters who approve the $j$-th
candidate. Since there are exactly $s$ selected voters, all of them
must approve the $j$-th candidate.

Finally, we %
ensure that the average satisfaction of the
selected voters is below $y$, by adding constraint %
$
  \textstyle \frac{1}{s}\sum_{i=1}^{n} \sum_{j \in W} {a_{ij} \cdot x_i} < y.
$
If there is an assignment %
that satisfies these constraints, then the selected voters form an
$\ell$-cohesive group with average satisfaction below~$y$. Otherwise,
no such group exists.

\subsection{Verification}

For a comparison with previous studies regarding JR and EJR, we also
consider the following verification problem.

\begin{definition}
  In the \textsc{PD-Verification} problem we are given an
  election~$E$, a committee~$W$, a PD function $f$, and we ask if $W$
  has proportionality degree $f$.
\end{definition}

As \textsc{PD-Verification} is very closely related to the complement
of \textsc{PD-Failure}, we find that it is $\conp$-complete (we give the
formal proof in Appendix~\ref{app:pdv}).

\begin{theorem}\label{cor:pdverification}
  \textsc{PD-Verification} is $\conp$-complete.
\end{theorem}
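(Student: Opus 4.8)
The plan is to establish membership in $\conp$ and $\conp$-hardness separately, exploiting the tight connection between $\pdverification$ and $\pdf$ that is already flagged in the text.

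For membership, I would show that the complement of $\pdverification$ lies in $\np$. A committee $W$ fails to have proportionality degree $f$ exactly when there is some cohesiveness level $\ell \in [k]$ together with an $\ell$-cohesive group $X$ whose average satisfaction for $W$ is strictly below $f(\ell)$. A nondeterministic machine can guess both $\ell$ and the group $X$, verify in polynomial time that $X$ is genuinely $\ell$-cohesive (checking its size against $\ell \cdot \nicefrac{n}{k}$ and that its members share at least $\ell$ commonly approved candidates), and then compute its average satisfaction to compare with $f(\ell)$. Since this certificate is of polynomial size and checkable in polynomial time, the complement is in $\np$, and therefore $\pdverification$ is in $\conp$.

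For hardness, I would give a reduction from $\pdf$ to the \emph{complement} of $\pdverification$; since $\pdf$ is $\np$-complete by Theorem~\ref{thm:pdfailure}, this yields $\conp$-hardness of $\pdverification$. Given a $\pdf$ instance $(E, W, \ell, y)$, I would reuse the same election $E$ and committee $W$, and define a PD function $f$ by $f(\ell) = y$ and $f(i) = 0$ for every $i \neq \ell$. The crucial point is that the constraints imposed by $f$ at levels $i \neq \ell$ are vacuous, because the average satisfaction of any cohesive group is automatically nonnegative. Hence $W$ has proportionality degree $f$ if and only if every $\ell$-cohesive group in $E$ has average satisfaction at least $y$; equivalently, $W$ fails to have PD $f$ precisely when $E$ contains an $\ell$-cohesive group with average satisfaction strictly below $y$, which is exactly the ``yes''-condition of the original $\pdf$ instance. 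Observe that the strict inequality in the definition of $\pdf$ matches the strict failure of a PD constraint ($f$ requires average satisfaction \emph{at least} $f(\ell)$), so the equivalence is exact.

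I expect no substantial obstacle: the reduction is essentially the identity on the election and committee, and the only care required is in verifying that zeroing out $f$ on the non-target levels renders those constraints trivially satisfied, so that the single binding threshold $f(\ell) = y$ captures exactly the $\pdf$ question. The remaining work is routine bookkeeping to confirm that the group sizes and the strict-versus-nonstrict inequalities line up between the two problem definitions.
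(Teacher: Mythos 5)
Your proposal is correct and follows essentially the same route as the paper's own proof: membership via guessing a cohesiveness level and a violating cohesive group for the complement, and $\conp$-hardness by reducing \textsc{PD-Failure} to the complement of \textsc{PD-Verification} using the identical election and committee with the PD function set to $y$ at level $\ell$ and $0$ elsewhere. The key observation you highlight---that zeroing out the non-target levels makes those constraints vacuous since average satisfaction is nonnegative---is exactly the argument the paper uses.
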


Testing if a committee provides EJR is $\conp$-complete as
well~\citep{justified-representation-in-approval-voting}, so in this
respect PD and EJR are analogous. There is also a polynomial-time
algorithm for testing if a committee provides JR, and in the PD world
this corresponds to a polynomial-time algorithm for checking if a
committee admits a given constant PD function. Such an algorithm was
included as part of the proof of Theorem~\ref{thm:const-pd-np}.

\begin{corollary}
  \textsc{PD-Verification} for a constant PD functions (provided as
  input) is in $\p$.
\end{corollary}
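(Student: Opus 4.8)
The plan is to observe that the restriction of \textsc{PD-Verification} to constant PD functions is solved directly by the verification subroutine embedded in the proof of Theorem~\ref{thm:const-pd-np}. Recall that an instance of \textsc{PD-Verification} supplies an election $E = (C,V)$, a committee $W$, and a PD function $f$; when $f$ is constant there is a single value $x$ with $f(\ell) = x$ for all $\ell \in [k]$, and the question becomes whether every cohesive group in $E$ has average satisfaction at least $x$ with respect to $W$.

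First I would recall the key structural fact used in Theorem~\ref{thm:const-pd-np}: since every $\ell$-cohesive group is also a $1$-cohesive group, and since $f$ is constant, the verification reduces to computing the minimum average satisfaction attained by any $1$-cohesive group and comparing it against $x$. This minimum can be found candidate by candidate. For each $c \in C$, the least-satisfied $1$-cohesive group all of whose members approve $c$ is obtained by taking the $\lceil n/k \rceil$ voters in $A(c)$ whose satisfaction $|A(v) \cap W|$ is smallest, with the convention that $\sat(c) = +\infty$ if $|A(c)| < n/k$. Setting $y = \min_{c \in C} \sat(c)$ yields the smallest average satisfaction over all $1$-cohesive groups, so $W$ has PD $f$ exactly when $y \ge x$ (or when $y = +\infty$, in which case $E$ has no cohesive groups at all and $W$ has PD $f$ trivially).

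Since $\sat(c)$ is computed for each candidate by sorting $A(c)$ according to voter satisfaction and averaging the bottom $\lceil n/k \rceil$ entries, the whole procedure runs in polynomial time. I do not expect a genuine obstacle here: the entire algorithm, together with its correctness argument, was already established inside the proof of Theorem~\ref{thm:const-pd-np}, where it was phrased as ``given such an instance and size-$k$ committee $W$, verify if $W$ has PD $f$.'' The corollary therefore follows simply by noting that that algorithm takes $W$ and a constant $f$ as input and decides membership in the language of \textsc{PD-Verification}.
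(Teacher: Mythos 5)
Your proposal is correct and follows exactly the paper's route: the paper justifies this corollary by pointing to the verification algorithm embedded in the proof of Theorem~\ref{thm:const-pd-np} (compute $\sat(c)$ for each candidate as the average satisfaction of the $\lceil n/k\rceil$ least-satisfied voters in $A(c)$, take the minimum $y$, and compare with the constant value $x$, using that every $\ell$-cohesive group is also a $1$-cohesive group). Your write-up reproduces that same algorithm and correctness argument, so there is nothing to add.
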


\section{Dealing With Computational Hardness}

In this section we consider circumventing the computational hardness
of our problems by studying their parameterized complexity and by
considering structured elections.

\subsection{Fixed-Parameter Tractability}
\label{sec:fpt}

Our two main problems, \textsc{PD-Committee} and \textsc{PD-Failure},
are fixed-parameter tractable with respect to the number of candidates
and the number of voters.

For \textsc{PD-Failure} and the parameterization by the number of
candidates, we proceed similarly as in the proof of
Theorem~\ref{thm:const-pd-np}. Namely, for each set $R$ of candidates
we consider the set $V(R)$ of all the voters that approve members of
$R$, one-by-one remove from this set the voters with the highest
satisfation, and watch if at any point we obtain an $\ell$-cohesive
group with average satisfaction below the required value. Using a
similar approach, and trying every possible committee, we also obtain
an algorithm for \textsc{PD-Committee}.

For the parameterization by the number of voters, we solve our
problems by forming ILP instances and solving them using the classic
algorithm of \citet{len:j:integer-fixed}. This is possible because
with $n$ voters there are at most $2^n$ cohesive groups and each
candidate has one of $2^n$ types (where the type of a candidate is the
set of voters that approve him; candidates with the same type are
interchangeable).

\begin{theorem}\label{thm:fpt-thm}
  There are FPT algorithms for \textsc{PD-Committee} and
  \textsc{PD-Failure} both for the parameterization by the number of
  candidates and for the parameterization by the number of voters.
\end{theorem}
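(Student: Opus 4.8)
The plan is to treat the two parameters separately and, within each, to reduce \textsc{PD-Committee} to repeated calls of a \pdfailure subroutine.

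\textbf{Parameterization by the number of candidates $m$.} First I would handle \pdfailure directly. Since an $\ell$-cohesive group needs $\ell$ commonly approved candidates, I would iterate over every candidate set $R \subseteq C$ with $|R| = \ell$ (there are at most $2^m$ of them) and, for each, look at the voter set $V(R) = \{v : R \subseteq A(v)\}$. Every $\ell$-cohesive group whose common candidates include $R$ is a subset of $V(R)$ of size at least $s = \lceil \ell n / k \rceil$, and conversely any size-$\ge s$ subset of $V(R)$ is $\ell$-cohesive. To minimize average satisfaction for $W$ I would sort $V(R)$ by individual satisfaction and keep the $s$ least satisfied voters; because the running average of the smallest values is nondecreasing in the number of voters, this subset attains the minimum possible average over all eligible group sizes. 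Hence I accept iff some $R$ yields average below $y$. This is exactly the removal procedure sketched before the theorem, and it runs in $2^m \cdot \mathrm{poly}$ time. For \textsc{PD-Committee} I would additionally enumerate all $\binom{m}{k} \le 2^m$ committees and, for each candidate committee $W$ and each $\ell \in [k]$, invoke the \pdfailure subroutine with threshold $f(\ell)$, accepting $W$ iff it survives every test. The total time is $2^{O(m)}\cdot\mathrm{poly}$, so both problems are FPT in $m$.

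\textbf{Parameterization by the number of voters $n$.} Here the key structural fact is that each candidate is fully described by its type, i.e., the set of voters approving it, so there are at most $2^n$ types and candidates of the same type are interchangeable. A second, crucial observation is that whether a voter group $X$ is $\ell$-cohesive depends only on the election---on $|X|$ and on the number of candidates approved by all of $X$---and not on the chosen committee. For \pdfailure the committee $W$ is given, so I would simply enumerate the $\le 2^n$ voter subsets, test $\ell$-cohesiveness, and check whether the average satisfaction is below $y$; this is FPT in $n$. For \textsc{PD-Committee} I would build a single ILP and solve it with the algorithm of \citet{len:j:integer-fixed}. The variables are the integers $n_\tau$, one per candidate type $\tau$, standing for the number of selected candidates of that type; I constrain $0 \le n_\tau \le (\text{supply of type }\tau)$ and $\sum_\tau n_\tau = k$. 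The satisfaction of a voter $v$ is then $\sum_{\tau \ni v} n_\tau$, a linear expression in the variables. For each of the $\le 2^n$ cohesive groups $X$ I precompute $f^{\ast}(X)$, the maximum of $f(\ell)$ over all levels $\ell$ at which $X$ is cohesive, and I add the linear constraint $\sum_\tau |X \cap \tau|\, n_\tau \ge |X|\cdot f^{\ast}(X)$, which states exactly that the average satisfaction of $X$ meets its PD requirement. The instance is a ``yes''-instance iff this ILP is feasible, and since the number of variables is at most $2^n$, Lenstra's algorithm runs in time $g(n)\cdot\mathrm{poly}$.

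The routine parts are the correctness of the minimum-average argument and the bookkeeping of types. The main obstacle is the \textsc{PD-Committee} case under the voter parameterization: unlike \pdfailure, it carries a universal quantifier over exponentially many cohesive groups, so I must encode a \conp condition inside a single feasibility query. The two observations above make this possible---linearity of satisfaction in the type-count variables turns each group's requirement into one linear inequality, and committee-independence of cohesiveness lets me precompute the thresholds $f^{\ast}(X)$ as constants---but verifying both, and confirming that the number of \emph{variables} (not merely the number of constraints) is bounded by a function of $n$, is where the care is needed.
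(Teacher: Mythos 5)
Your proposal is correct and follows essentially the same route as the paper's proof: for the candidate parameterization, enumerate $\ell$-subsets of candidates (resp.\ committees) and use the least-satisfied-voters argument, and for the voter parameterization, enumerate voter subsets for \textsc{PD-Failure} and solve a type-based ILP with at most $2^n$ variables via Lenstra's algorithm for \textsc{PD-Committee}. The only cosmetic difference is that you merge, for each cohesive group $X$, the constraints across all levels $\ell$ into a single one using $f^{\ast}(X)$, whereas the paper writes one constraint per pair $(\ell, S)$; the two formulations are equivalent.
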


\begin{proof}
  Let us first consider the parameterization by the number of
  candidates and the \textsc{PD-Failure} problem. Our input consists
  of an election $E = (C,V)$, committee $W$ of size $k$, cohesiveness
  level~$\ell$, and rational threshold $y$. Let $m$ be the number of
  candidates and let $n$ be the number of voters. For each subset of
  $\ell$ candidates, we find a group of $\ell \cdot \frac{n}{k}$
  voters who are least satisfied with~$W$. If the lowest satisfaction
  among such groups is below $y$ then we accept and otherwise we
  reject. The correctness and fixed-parameter tractability follow
  immediately.

  For parameterization by the number of candidates and the
  \textsc{PD-Committee} problem, it suffices to try all committees and
  for each of them (and each cohesiveness level) use the algorithm for
  \textsc{PD-Failure} to check if it indeed achieves required PD.

  Next let us move on to the parameterization by the number of voters
  and the \textsc{PD-Failure} problem. We use the same notation as in
  the argument above for parameterization by the number of
  candidates. It suffices to consider every subset of voters, check if
  it is an $\ell$-cohesive group, and verify if its average
  satisfaction is below $y$.

  For the case of \textsc{PD-Committee} and parameterization by the
  number of voters, we employ integer linear programming. Let
  $E = (C,V)$ be the input election with $m$ candidates and $n$
  voters. We seek a committee of size $k$, with PD $f$.  There are
  $2^n$ subsets of the voters, and we order them in some way, so for
  each $i \in [2^n]$ we can speak of the $i$-th subset. For each such
  subset, we say that a candidate has type $i$ if she is approved
  exactly by the voters from the $i$-th subset (and only by them). For
  each $i \in [2^n]$ we let $c_i$ be the number of type-$i$ candidates
  and we form a variable~$x_i$, with the intended meaning that $x_i$
  is the number of type-$i$ candidates in the committee. We introduce
  the following constraints:
  \begin{enumerate}
  \item For each $i$, we require that $x_i \leq c_i$, i.e., that we do not
    select more type-$i$ candidates than available.
  \item We require that $\sum_{i \in [2^n]} x_i = k$, i.e., we ensure
    that we select a committee of size exactly $k$.
  \item For each $\ell \in [k]$ and each $\ell$-cohesive group $S$ of
    voters (due to our parameterization, we can enumerate them all),
    we form the following constraint:
    \[
      \sum_{v \in S} \sum_{i \in [2^n]} x_i \cdot [\text{$v$ approves type-$i$ candidates}] \geq |S| \cdot f(\ell),
    \]
    where we use the Iverson bracket notation (i.e., for a true/false
    statement $F$, by $[F]$ we mean~$1$ is $F$ is true and we mean $0$
    otherwise). This constraint ensures that each cohesive group has
    required level of average satisfaction.
  \end{enumerate}
  We solve this ILP instance using the classic algorithm of
  \citet{len:j:integer-fixed}. Since the number of variables is $2^n$
  and $n$ is the parameter, doing so is possible in FPT time. This
  completes the proof.
\end{proof}

Testing if a committee provides EJR is also fixed-parameter tractable
for the parameterizations considered in Theorem~\ref{thm:fpt-thm}. So, from
this point of view, dealing with PD is not harder than dealing with
EJR.

Finally, the problem of counting cohesive groups (and, thus, also the
problem of deciding if groups with particular cohesiveness level
exist) also is fixed-parameter tractable for our parameters. For
parameterization by the number of candidates, we can use the
inclusion-exclusion principle, and for the parameterization by the
number of voters we can explicitly look at each subset of voters.

\begin{theorem}
  There are FPT algorithms for \textsc{(\#)Cohesive-Group}, for the
  parameterizations by the number of candidates and by the number of
  voters.
\end{theorem}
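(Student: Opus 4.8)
The plan is to handle the two parameterizations separately, and within each to treat the decision version \cohesivegroup{} and the counting version \sharpcohesivegroup{} together.

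First I consider the parameterization by the number of voters $n$. Here the approach is brute force. There are at most $2^n$ subsets of $V$, and for each subset $X$ we can check in polynomial time whether it is an $\ell$-cohesive group (verify that $|X| \geq \ell \cdot \nicefrac{n}{k}$ and that $\bigcap_{v \in X} A(v)$ has size at least $\ell$). Enumerating all subsets and counting those that pass the test yields $\sharplcg(E,k,\ell)$, and reporting whether this count is positive solves the decision version. The running time is $2^n$ times a polynomial, which is FPT in $n$.

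Next I consider the parameterization by the number of candidates $m$. For the decision version this is again easy: a group with at least $\ell$ common approved candidates and enough voters exists if and only if some $\ell$-element set $R \subseteq C$ is approved by at least $t := \lceil \ell \cdot \nicefrac{n}{k}\rceil$ voters, so it suffices to iterate over the at most $\binom{m}{\ell} \le 2^m$ candidate sets of size $\ell$ and, for each, count in polynomial time the voters who approve all of its members.

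The main obstacle is the counting version under the parameterization by $m$, since the number of voters---and hence the number of distinct cohesive groups---may be exponential, so we cannot enumerate voter subsets directly. Here I would use inclusion--exclusion over the lattice of candidate subsets. For each $R \subseteq C$, let $V_R = \{v \in V : R \subseteq A(v)\}$ be the set of voters who approve every candidate in $R$; each $|V_R|$ is computable in polynomial time. Every subset $X \subseteq V_R$ has common approval set $\bigcap_{v \in X} A(v) \supseteq R$, so $g(R) := \sum_{j \ge t}\binom{|V_R|}{j}$ counts the subsets $X$ of size at least $t$ whose common approval set is \emph{some} superset of $R$. Writing $h(R)$ for the number of size-at-least-$t$ subsets whose common approval set is \emph{exactly} $R$, we have $g(R) = \sum_{R' \supseteq R} h(R')$, and Möbius inversion over the subset lattice gives $h(R) = \sum_{R' \supseteq R}(-1)^{|R'|-|R|} g(R')$. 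The number of $\ell$-cohesive groups is $\sum_{R : |R| \ge \ell} h(R)$; swapping the order of summation, this equals $\sum_{R' \subseteq C} g(R')\cdot c(|R'|)$, where the coefficient $c(r) = \sum_{i=\ell}^{r}\binom{r}{i}(-1)^{r-i}$ depends only on $|R'|$ (and is $0$ when $r < \ell$) and can be precomputed. Since we range over the $2^m$ subsets $R'$ and all arithmetic is on integers of polynomial bit-length, the whole computation runs in $2^m$ times a polynomial, which is FPT in $m$; the decision version under $m$ follows either from this count being positive or from the simpler argument above. This completes the plan.
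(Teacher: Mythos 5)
Your proposal is correct and follows essentially the same route as the paper: brute-force enumeration of voter subsets for the parameterization by $n$, enumeration of size-$\ell$ candidate sets for the decision version under $m$, and inclusion--exclusion over candidate subsets for counting under $m$. In fact, your M\"obius-inversion derivation with the coefficients $c(r) = \sum_{i=\ell}^{r}\binom{r}{i}(-1)^{r-i}$ supplies exactly the details that the paper compresses into the single remark ``we can use the inclusion-exclusion principle.''
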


\begin{proof}
  Let us first consider the parameterization by the number of voters
  and the \textsc{\#Cohesive-Group} problem. Our input consists of an
  election $E = (C,V)$, committee $W$ of size $k$, and cohesiveness
  level $\ell$. Let $m$ be the number of candidates and let $n$ be the
  number of voters. Initially, we have a counter set to zero. For each
  subset of at least $\ell \cdot \frac{n}{k}$ voters, we compute the
  set of candidates that are approved by all these voters. If this set
  has size at least $\ell$ then we increase the counter and otherwise
  we do not. At the end, the counter contains the desired answer. For
  the \textsc{Cohesive-Group} problem it is enough to check if the
  counter is above $0$.
  
  For the parameterization by the number of candidates and the
  \textsc{Cohesive-Group} problem, for each subset of $\ell$
  candidates we calculate the number of voters that approve all these
  candidates and accept if it is at least $\ell \cdot \frac{n}{k}$, we
  reject if we do not accept for any subset. For the
  \textsc{\#Cohesive-Group} problem, we can use the
  inclusion-exclusion principle.
  
\end{proof}

\subsection{Structured Preferences}
\label{sec:domains}

Next we consider two domains of structured preferences, introduced by
\citet{structure-in-dichotomous-preferences}. Such domains are
interesting because, on the one hand, they capture some realistic
scenarios, and, on the other hand, by assuming them it is often
possible to provide polynomial time algorithms for problems that in
general are intractable.

\begin{definition}[\textbf{\citet{structure-in-dichotomous-preferences}}]
  An election $E = (C,V)$ has candidate interval (CI) preferences
  (voter interval preferences, VI) if it is possible to order the
  candidates (the voters) so that for each voter $v$ (for each
  candidate $c$) the set $A(v)$ (the set $A(c)$) is an interval w.r.t. this order.
\end{definition}
For an example of CI preferences, consider a political election
where candidates are ordered according to the left-to-right spectrum
of opinions and the voters approve ranges of candidates whose opinions
are close enough to their own.
\citet{structure-in-dichotomous-preferences} gave algorithms for
deciding if a given election has CI or VI preferences, and for
computing appropriate orders of candidates or voters. Thus, for
simplicity, we assume that these orders are provided together with our
input elections.
We mention that a number of other preference domains are considered in
the literature---see, e.g., the works of \citet{yang2019tree} and
\citet{GBSF21}---but the CI and VI ones are by far the most popular.
For a very detailed discussion of structured domains, albeit in the
world of ordinal preferences, we point to the survey of
\citet{elk-lac-pet:t:ordinal-restricted-domains}.

Unfortunately, even for CI and VI elections we do not know how to
solve the \textsc{PD-Committee} problem in
polynomial-time.\footnote{In particular, the approach of
  \citet{pet-lac:j:spoc} based on solving totally unimodular ILP
  instances does not seem to work here.}
Nonetheless, we do have polynomial-time
algorithms for the \textsc{PD-Failure} problem.

\begin{theorem}
   \textsc{PD-Failure} restricted to either CI or VI elections is in $\p$.
\end{theorem}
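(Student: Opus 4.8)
The plan is to exploit the one-dimensional interval structure to reduce the search for a witnessing cohesive group to a polynomial number of candidate cases. Recall that \textsc{PD-Failure} asks whether there exists an $\ell$-cohesive group whose average satisfaction with $W$ is below the threshold $y$. By the observation already used in the ILP formulation, if such a group exists, then there is one of size exactly $s = \lceil \ell \cdot \nicefrac{n}{k} \rceil$. So I would always look for a size-$s$ group of voters who (a)~jointly approve at least $\ell$ common candidates, and (b)~have low total satisfaction. The key leverage from CI/VI is that the ``$\ell$ common candidates'' constraint, which in general forces us over all $\binom{m}{\ell}$ candidate subsets, collapses into an interval condition.

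\textbf{The CI case.} Here I would fix the order on candidates so that each $A(v)$ is an interval. A set of voters has a common approved candidate exactly when their approval intervals share a point, and more generally they share at least $\ell$ candidates exactly when the intersection of their intervals is an interval of length at least $\ell$. The intersection of a family of intervals is determined by the maximum left endpoint and the minimum right endpoint, so it suffices to guess these two extremes: for each ordered pair $(a,b)$ of candidates with $b - a + 1 \ge \ell$, consider the set $V_{a,b}$ of all voters whose approval interval contains the whole block $[a,b]$ (equivalently, whose interval has left endpoint $\le a$ and right endpoint $\ge b$). Every voter in $V_{a,b}$ approves all $\ell$ (or more) candidates in $[a,b]$, so any $s$ of them form a valid $\ell$-cohesive group. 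To minimize average satisfaction over such a group, I would sort $V_{a,b}$ by satisfaction $|A(v)\cap W|$ and take the $s$ least-satisfied voters, checking whether their average is below $y$. Iterating over the $O(m^2)$ interval endpoints and, for each, sorting $O(n)$ voters, gives a polynomial-time algorithm; we accept iff some interval yields a witnessing group.

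\textbf{The VI case.} Now the roles flip: I would fix the order on voters so that each $A(c)$ is an interval, i.e.\ each candidate is approved by a contiguous block of voters. A size-$s$ set of voters $X$ has at least $\ell$ common candidates exactly when at least $\ell$ candidates have approval intervals containing all of $X$. Since here too the witnessing group can be taken of size exactly $s$, and since the voters in any group must form (together with the common-candidate constraint) a configuration controlled by which candidates' blocks contain them, I would enumerate over candidate blocks to pin down the common candidates and then select the cheapest $s$ voters consistent with those blocks. The natural move is to iterate over the $O(m)$ candidates and, treating each as a potential ``anchor'' common candidate, restrict to the voter-block where that candidate is approved, then greedily collect the $s$ least-satisfied voters within it that still admit $\ell$ common candidates; because the candidate-approval intervals live on a single line, the set of voters admitting $\ell$ common candidates is again describable by interval endpoints, keeping the enumeration polynomial.

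\textbf{The main obstacle} I anticipate is the VI case rather than CI, because the ``low average satisfaction'' objective and the ``$\ell$ common candidates'' constraint interact: choosing the $s$ cheapest voters may destroy the property that $\ell$ candidates are commonly approved, whereas in CI the two candidate endpoints cleanly decouple the two requirements. The care needed is to argue that it is enough to enumerate over a polynomial family of voter-intervals (determined by candidate block endpoints) within each of which \emph{any} $s$ voters automatically share $\ell$ candidates, so that inside each such interval the greedy ``$s$ cheapest'' choice is both feasible and optimal. Verifying that this family is polynomial-sized and exhaustive is the crux; once established, sorting by satisfaction and taking the bottom $s$ closes both cases.
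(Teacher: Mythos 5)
Your CI half is correct and is essentially the paper's own algorithm: you enumerate contiguous candidate blocks (the paper uses only the $m-\ell+1$ blocks of length exactly $\ell$, you use all $O(m^2)$ blocks of length at least $\ell$, which is harmless), take the $s$ least-satisfied voters among those approving the whole block, and compare their average with $y$; completeness holds because under CI the commonly approved candidates of any cohesive group form a candidate interval of length at least $\ell$, and soundness holds because any $s$ voters approving a common block of $\ell$ candidates are an $\ell$-cohesive group.

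The VI half, however, has a genuine gap, and it is exactly the one you flag as ``the crux'': you never exhibit the polynomial, exhaustive family of voter intervals inside which \emph{any} $s$ voters automatically share $\ell$ common candidates, and the concrete procedure you sketch is not sound. Anchoring on a single candidate $c$ and restricting to the voter block $A(c)$ does not give such an interval: two voters at opposite ends of $A(c)$ are only guaranteed to share the one candidate $c$, so for $\ell \ge 2$ the ``$s$ cheapest voters in $A(c)$'' may be infeasible, and ``greedily collect the $s$ least-satisfied voters that still admit $\ell$ common candidates'' is a constrained selection whose greedy optimality you neither specify nor justify (cheapest-first choices are not exchange-compatible with this feasibility constraint, since the set of common candidates shrinks as the chosen voters spread out). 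The missing lemma is simple, and it is how the paper closes the case: index the family by the \emph{leftmost voter} rather than by a candidate. For each $i$, let $X(i) = \{v_i, v_{i+1}, \ldots, v_j\}$, where $j$ is the largest index with $|A(v_i) \cap A(v_j)| \ge \ell$ (empty if $v_i$ approves fewer than $\ell$ candidates). Under VI, every candidate in $A(v_i) \cap A(v_j)$ is approved by all voters between $v_i$ and $v_j$, so any $s$ voters of $X(i)$ form an $\ell$-cohesive group; conversely, an $\ell$-cohesive group with leftmost voter $v_i$ and rightmost voter $v_{j'}$ has all its common candidates in $A(v_i) \cap A(v_{j'})$, so $j' \le j$ and the group is a subset of $X(i)$. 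With these $n$ intervals in hand, your ``sort by satisfaction and take the bottom $s$'' step is both feasible and dominating, and the argument goes through.
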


\begin{proof}
  First, we give an algorithm for the CI case. Our input consists of
  an election $E = (C,V)$, where $C = \{c_1, \ldots, c_m\}$ and
  $V = (v_1, \ldots, v_n)$, a size-$k$ committee $W$, cohesiveness
  level $\ell$, and threshold value $y$. Without loss of generality,
  we assume that $E$ is CI with respect to the order
  $c_1 \lhd c_2 \lhd \cdots \lhd c_m$.

  Since $E$ is a CI election, we observe that if $X$ is some cohesive
  group whose all members approve some two candidates $c_i$ and $c_j$,
  $i \leq j$, then all members of $X$ also approve candidates
  $c_{i+1}, \ldots, c_{j-1}$. For each $i \leq m-\ell+1$, let $X(i)$
  be the set of all voters who approve each of the candidates
  $c_{i}, \ldots, c_{i+\ell-1}$. By the preceding argument, we see
  that every $\ell$-cohesive group can be obtained by taking some set
  $X(i)$ and (possibly) removing some of its members.

  Our algorithm proceeds as follows. For each set $X(i)$, we form a
  set $Y(i)$ by taking $X(i)$ and removing all but
  $\lceil \ell \cdot \nicefrac{n}{k} \rceil$ voters that are least
  satisfied with $W$. (If a given $X(i)$ contains fewer than
  $\lceil \ell \cdot \nicefrac{n}{k} \rceil$ voters then we set
  $Y(i) = \emptyset$ and we assume that the average satisfaction of
  its voters is $+\infty$.)
  If there is some $i$ such that the average satisfaction of the
  voters in $Y(i)$ is below $y$, then we accept (indeed, we have just
  found an $\ell$-cohesive group with average satisfaction below
  $y$). If there is no such $Y(i)$, then we reject (we do so because
  each nonempty $Y(i)$ has the lowest average satisfaction among all
  the $\ell$-cohesive groups that can be obtained by removing voters
  from $X(i)$).  Correctness and polynomial running time follow
  immediately.

  Now let us consider the VI case. We use the same notation as before,
  except that we assume that $E$ is VI with respect to the voter order
  $v_1 \lhd v_2 \lhd \cdots \lhd v_n$. We use the same algorithm as in
  the CI case, but for the $X(i)$ sets defined as follows (let
  $s = \lceil \ell \cdot \nicefrac{n}{k}\rceil$):
  For each $i \in [n-s+1]$, we let
  $X(i) = \{v_{i}, v_{i+1}, \ldots, v_{j}\}$, where $j$ is the largest
  value such that $|A(v_{i}) \cap A(v_{j})| \geq \ell$ (if $v_i$
  approves fewer than $\ell$ candidates then $X(i)$ is empty).
  The algorithm remains correct because, as in the CI case, every
  $\ell$-cohesive group is a subset of some $X(i)$.
\end{proof}

Similar reasoning and observations as in the above proof also give the
algorithms for counting cohesive groups (and, thus, for deciding their
existence).
\begin{theorem}\label{thm:civi-counting}
  \textsc{(\#)Cohesive-Group} restricted to either CI or VI elections
  is in $\p$.
\end{theorem}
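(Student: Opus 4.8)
The plan is to reuse the structural characterizations of $\ell$-cohesive groups from the preceding proof and to augment them with a \emph{signature} that is assigned to each group so that no group is counted more than once. For the decision version $\cohesivegroup$ both cases are immediate. In the CI case, recall that every $\ell$-cohesive group is a subset of some set $X(i)$ of voters who approve the block $c_i, \ldots, c_{i+\ell-1}$; hence such a group exists iff some $X(i)$ contains at least $s = \lceil \ell \cdot \nicefrac{n}{k} \rceil$ voters, which I would test directly. For the VI case, the key lemma I would prove first is that the common candidates of a voter set $S$ depend only on its extreme voters: since each $A(c)$ is a voter interval, $\bigcap_{v \in S} A(v) = A(v_{\min S}) \cap A(v_{\max S})$. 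Thus an $\ell$-cohesive group exists iff there is a pair $i \le j$ with $|A(v_i) \cap A(v_j)| \ge \ell$ and $j - i + 1 \ge s$, which is again a direct check.

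The harder part is the counting problem $\sharpcohesivegroup$, where the obstacle is that one group can be a subset of many of the $X(i)$ (in CI), so a naive summation over blocks over-counts. For CI I would break the symmetry by charging each group $S$ to the leftmost $\ell$-block inside its common intersection, i.e.\ to the index $i^\star = \max_{v \in S} a_v$, where $[a_v, b_v]$ denotes the approval interval of $v$. Writing $P_i$ for the voters approving $c_i, \ldots, c_{i+\ell-1}$ (this is exactly $X(i)$) and $Q_i \subseteq P_i$ for those among them with $a_v < i$, a group has canonical index $i$ precisely when it lies in $P_i$, has size at least $s$, and is not contained in $Q_i$ (so that its common intersection really starts at $i$). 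The number of such groups is therefore
\[
  \Bigl(\, \textstyle\sum_{t=s}^{|P_i|} \binom{|P_i|}{t} \,\Bigr) - \Bigl(\, \textstyle\sum_{t=s}^{|Q_i|} \binom{|Q_i|}{t} \,\Bigr),
\]
and summing this over all admissible $i$ gives the total, since every group is charged to exactly one index.

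For the VI case the signature is even cleaner: I would charge each group to the pair $(\min S, \max S)$ of its extreme voters, which is clearly unique to $S$. Using the endpoint lemma above, once a pair $i \le j$ with $|A(v_i) \cap A(v_j)| \ge \ell$ is fixed, the groups with exactly these endpoints are obtained by including $v_i$ and $v_j$ and then choosing an arbitrary subset of the $j - i - 1$ intervening voters subject to the total size being at least $s$; crucially, the common-candidate condition is insensitive to which intervening voters are taken. Hence the number of groups with endpoints $(i,j)$ is $\sum_{t = \max(0,\, s-2)}^{j-i-1} \binom{j-i-1}{t}$ for $i < j$ (with the singleton case $i = j$ handled separately), and summing over the $O(n^2)$ admissible pairs yields the answer.

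In both cases the binomial sums and the sets $P_i, Q_i$ (or the endpoint intersections) are computable in polynomial time, so the whole procedure runs in polynomial time. I expect the main conceptual obstacle to be precisely the canonical-signature argument that rules out double counting and the verification that each group is charged exactly once; the interval observations needed to justify it are direct consequences of the CI/VI structure already exploited in the previous proof.
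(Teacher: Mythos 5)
Your proposal is correct, and it rests on exactly the same structural facts as the paper's proof: under CI the commonly approved candidates of any group form a contiguous block, under VI the common candidates of a group are determined by its two extreme voters, and in both cases each group is charged to a canonical signature and counted via binomial coefficients. The differences are worth noting. First, the paper routes the counting through the fixed-size variant \sharpfscg{} (count groups of exactly $x$ voters, then sum over $x$ using Proposition~\ref{red:12}), whereas you count groups of all admissible sizes at once with tail sums $\sum_{t \ge s}\binom{\cdot}{t}$, dispensing with that detour. Second, for CI both arguments use the same signature (the leftmost commonly approved candidate $c_i$), but the paper counts the groups charged to $c_i$ by splitting eligible voters into those who approve some candidate left of $c_i$ ($L_1$) and those who do not ($L_2$) and convolving, $\sum_{t\ge 1}\binom{|L_2|}{t}\binom{|L_1|}{x-t}$, while you use complementary counting: all size-$\ge s$ subsets of $P_i$ minus those contained in $Q_i$; the two are equivalent since $Q_i = L_1$ and $P_i \setminus Q_i = L_2$. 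Third, for VI the paper charges each group to its minimum-index voter and uses the monotonicity fact that the voters sharing at least $\ell$ candidates with $v_i$ form a contiguous range $[v_i, v_{j_{\max}}]$, yielding $\binom{j_{\max}-i}{x-1}$ per voter; you instead charge to the endpoint pair $(\min S, \max S)$, which costs $O(n^2)$ signatures rather than $O(n)$ but makes uniqueness of the charging immediate and avoids the monotonicity step. Both variants run in polynomial time, and your direct decision algorithms (a largest $X(i)$ for CI, existence of a good endpoint pair for VI) are also sound, whereas the paper simply derives the decision version from the counting one.
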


\noindent We prove Theorem~\ref{thm:civi-counting} via the following
two theorems (they suffice due to Proposition~\ref{red:12}).

\begin{theorem}
  There is a polynomial-time algorithm for the $\sharpfscg$ problem
  under the VI restriction
\end{theorem}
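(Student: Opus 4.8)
The plan is to exploit the interval structure of the voters' approvals to reduce the count to a sum of binomial coefficients over pairs of ``boundary'' voters. Fix the VI order $v_1 \lhd v_2 \lhd \cdots \lhd v_n$ that comes with the input, and for each candidate $c$ let $[\alpha_c, \beta_c]$ be the interval of voter positions such that $A(c) = \{v_{\alpha_c}, \ldots, v_{\beta_c}\}$. The key observation is this: for any group $X$ of voters whose lowest-indexed member sits at position $a$ and whose highest-indexed member sits at position $b$, a candidate $c$ is approved by \emph{every} member of $X$ if and only if $\alpha_c \le a$ and $\beta_c \ge b$ (this is exactly the condition $X \subseteq A(c)$, using that $A(c)$ is an interval). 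Consequently the size of the common approval set of $X$ depends only on the pair $(a,b)$, and we may write it as $N(a,b) = |\{c \in C : \alpha_c \le a \text{ and } \beta_c \ge b\}|$. Thus $X$ satisfies condition~(b) of being an $\ell$-cohesive group precisely when $N(\min X, \max X) \ge \ell$.

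First I would dispose of the size requirement. Since we are counting groups of exactly $x$ voters, and condition~(a) demands at least $\ell \cdot \nicefrac{n}{k}$ voters, the answer is $0$ whenever $x < \ell \cdot \nicefrac{n}{k}$; otherwise condition~(a) holds automatically for every size-$x$ group and may be ignored. Next I would precompute $N(a,b)$ for all $1 \le a \le b \le n$: tabulating, for each pair of endpoints, the number of candidates having exactly those endpoints, and then taking a two-dimensional prefix/suffix sum yields all the $N(a,b)$ values in $O(n^2)$ time (plus $O(m)$ to read the candidates).

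The counting step then observes that every size-$x$ group $X$ has a unique pair $(\min X, \max X) = (a,b)$, so these pairs partition the groups. For $x \ge 2$ and $a < b$, the number of size-$x$ groups with exactly this minimum and maximum is $\binom{b-a-1}{x-2}$, since the remaining $x-2$ members must be chosen from the $b-a-1$ voters strictly between positions $a$ and $b$. Hence
\[
  \sharplcg(E,k,\ell,x) \;=\; \sum_{\substack{1 \le a < b \le n \\ N(a,b) \ge \ell}} \binom{b-a-1}{x-2},
\]
with the $x=1$ case handled separately by counting the positions $a$ for which $N(a,a) = |A(v_a)| \ge \ell$. All binomial coefficients can be precomputed, the sum ranges over $O(n^2)$ pairs, and every term is evaluated in constant time, so the whole procedure runs in polynomial time.

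The crux of the argument---the only genuinely nontrivial step---is the boundary observation that membership of a candidate in the common approval set of $X$ depends solely on $\min X$ and $\max X$; everything afterward is bookkeeping with prefix sums and binomials. I would therefore devote most of the write-up to justifying that observation carefully from the definition of VI preferences, and would only briefly verify that the binomial partition is exact (each group counted exactly once) and that the feasibility check for condition~(a) is correct.
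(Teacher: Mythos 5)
Your proof is correct and takes essentially the same approach as the paper's: both rest on the observation that under VI the set of commonly approved candidates of a group is determined solely by its minimum- and maximum-index members, and both then count groups by fixing boundary voters and choosing the interior members via a binomial coefficient. The only difference is granularity---you partition by both endpoints and sum $\binom{b-a-1}{x-2}$ over valid pairs, whereas the paper fixes only the minimum-index voter, uses monotonicity to find the largest reachable index $j$, and aggregates all maxima at once into $\binom{j-i}{x-1}$ (the two tallies agree by the hockey-stick identity); your write-up also makes explicit the size-feasibility check and the $x=1$ case, which the paper leaves implicit.
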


\newcommand{\smallestCG}{\mathit{smallestCG}}

\begin{proof}
  Let $(E, k, \ell, x)$ be a $\sharpfscg$ VI instance, where $E$ is an
  election with candidates $C$ and voters $V$, $k$ is the committee
  size, $\ell$ is the cohesiveness level, and $x$ is the size of
  cohesive groups. We ask how many $\ell$-cohesive groups of size $x$
  are there in election $E$.  We assume that $V = (v_1, \ldots, v_n)$
  and the election is VI for this order of the candidates.

  We observe that if voters $v_i$ and $v_j$ approve candidate $c_p$,
  then each voter $v_k$ between $v_i$ and $v_j$ also approves $c_p$,
  because under VI each candidate is approved by a consecutive segment
  of voters. As a result, if $v_i$ and $v_j$ have at least $\ell$
  common candidates, then each voter $v_k$ between $v_i$ and $v_j$
  also approves these candidates.

  By $\smallestCG(v_i, \ell, x)$ we mean the number of $\ell$-cohesive
  groups of size $x$ in which voter~$v_i$ has the smallest
  index. Then, the sum of the $\smallestCG$ values over all the voters
  is the final answer. Now let us show how to calculate
  $\smallestCG(v_i, \ell, x)$.

  Given a voter $v_i$, a group cohesiveness level $\ell$, and an
  integer $x$, we find the greatest index $j$ such that voter $v_j$
  still has at least $\ell$ common candidates with $v_i$. If $v_j$
  does not exist or the number of voters in range $[v_i,v_j]$ is lower
  than $x$, then return $0$. Otherwise, we select the voter $v_i$ and
  $x-1$ other voters from $[v_{i+1},v_j]$; we can do it in
  ${j-i} \choose {x-1}$ ways and this is the value we output.  This
  completes the proof.
\end{proof}

\begin{theorem}
  There is a polynomial-time algorithm for the $\sharpfscg$ problem
  under the CI restriction
\end{theorem}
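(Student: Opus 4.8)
The plan is to exploit the CI structure by identifying each voter $v$ with the interval $[a_v, b_v]$ of candidates that she approves with respect to the given order $c_1 \lhd \cdots \lhd c_m$, so that $A(v) = \{c_{a_v}, \ldots, c_{b_v}\}$. For any group $X$ of voters the set of commonly approved candidates is the intersection $\bigcap_{v \in X} [a_v, b_v] = [\max_{v \in X} a_v,\ \min_{v \in X} b_v]$, which is again an interval. Hence $X$ has at least $\ell$ common candidates precisely when $\min_{v \in X} b_v - \max_{v \in X} a_v + 1 \geq \ell$. Since we count groups of size exactly $x$, I would first dispatch the global size requirement: if $x < \lceil \ell \cdot \nicefrac{n}{k} \rceil$ then no size-$x$ group can be $\ell$-cohesive and the answer is $0$; otherwise every size-$x$ group with at least $\ell$ common candidates is a valid $\ell$-cohesive group, and it remains to count those.

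The key idea is to partition the sought subsets according to the left endpoint $L = \max_{v \in X} a_v$ of their common interval, a quantity uniquely determined by $X$. For each candidate position $L$, I would count the size-$x$ subsets $X$ with $\max_{v \in X} a_v = L$ and $\min_{v \in X} b_v \geq L + \ell - 1$ (the latter being exactly the length-$\geq\!\ell$ condition once the left endpoint is pinned at $L$). Writing $N(p, q)$ for the number of voters $v$ with $a_v \leq p$ and $b_v \geq q$, the subsets all of whose voters obey both one-sided bounds number $\binom{N(L,\,L+\ell-1)}{x}$, while those that additionally miss the endpoint (every $a_v \leq L-1$) number $\binom{N(L-1,\,L+\ell-1)}{x}$; their difference counts exactly the subsets with $\max_{v\in X} a_v = L$ and $\min_{v \in X} b_v \geq L + \ell - 1$. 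Summing over $L$ then gives
\[
  \sum_{L=1}^{m-\ell+1} \left( \binom{N(L,\, L+\ell-1)}{x} - \binom{N(L-1,\, L+\ell-1)}{x} \right),
\]
which I claim is the total number of $\ell$-cohesive groups of size $x$.

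The point the argument turns on — and the only place where anything could go wrong — is ensuring that no group is counted twice and none is missed. This is precisely what the decomposition by left endpoint secures: each group's common interval has a single well-defined value $\max_{v\in X} a_v$, so every valid group contributes to exactly one term, namely the one for $L$ equal to that endpoint, and the binomial difference is what enforces ``left endpoint equal to $L$'' rather than the easier ``at most $L$,'' thereby preventing overcounting. For the running time, each quantity $N(p,q)$ is computable in $O(n)$ time (or all of them at once via prefix sums over the endpoints), there are $O(m)$ terms, and the binomial coefficients involve integers of polynomially many bits, so the whole procedure runs in time polynomial in the instance size. I would close by remarking that, together with the VI algorithm and Proposition~\ref{red:12}, this completes the proof of Theorem~\ref{thm:civi-counting}.
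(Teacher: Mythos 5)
Your proposal is correct and takes essentially the same approach as the paper: both decompose the $\ell$-cohesive groups of size $x$ according to the left endpoint of their commonly approved interval (the smallest-index common candidate), and count, for each endpoint $L$, the size-$x$ voter subsets whose common interval starts exactly at $L$ and extends at least $\ell$ candidates to the right. Your per-endpoint count $\binom{N(L,\,L+\ell-1)}{x} - \binom{N(L-1,\,L+\ell-1)}{x}$ agrees with the paper's sum $\sum_{k\geq 1}\binom{|L_2|}{k}\binom{|L_1|}{x-k}$ via Vandermonde's identity (since $|L_1| = N(L-1,\,L+\ell-1)$ and $|L_1|+|L_2| = N(L,\,L+\ell-1)$), so the two arguments coincide term by term.
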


\begin{proof}
  Let $(E, k, \ell, x)$ be a $\sharpfscg$ CI instance, where $E$ is an
  election with candidates $C$ and voters $V$, $k$ is the committee
  size, $\ell$ is the cohesiveness level, and $x$ is the size of
  cohesive groups. We ask how many $\ell$-cohesive groups of size $x$
  are there in election $E$. We assume that $C = \{c_1, \ldots, c_m\}$
  and the election is CI for candidate order $c_1,c_2, \ldots, c_m$.

  We observe that if candidates $c_i$ and $c_j$ are approved by voter
  $v_p$, then each candidate $c_k$ between $c_i$ and $c_j$ is also
  approved by $v_p$, because under CI each voter approves a
  consecutive segment of candidates.  As a result, for each
  $\ell$-cohesive group the candidates approved by all its members
  form a consecutive segment.

  By $\smallestCG(c_j, \ell, x)$ we mean the number of $\ell$-cohesive
  groups of size $x$ in which candidate $c_j$ is the commonly approved
  candidate with the smallest index. Then, the sum of the
  $\smallestCG$ values through all the candidates is the final
  answer. Now let us show how to calculate the function
  $\smallestCG(c_j, \ell, x)$.

  Assume we are given candidate $c_j$, group cohesiveness level
  $\ell$, and an integer $x$. Let $L_1$ be the set of voters that
  approve all the candidates from
  $\{c_j, c_{j+1}, ... , c_{j+\ell-1}\}$ and at least one candidate
  which has index strictly smaller than $j$. Similarly, let $L_2$ be
  the set of voters that approve all the candidates from
  $\{c_j, c_{j+1}, ... , c_{j+\ell-1}\}$ and do not approve any
  candidate whose index is strictly smaller than $j$. Both values can
  be calculated in polynomial-time by a single iteration through
  election $E$. Now let us point out that each $\ell$-cohesive group
  accounted for in $\smallestCG(c_j, \ell, x)$ must consist of at
  least one voter included in $L_2$ and some voters included in
  $L_1$. As we do not know how many voters we should take from the
  first part, we iterate through all possible partition sizes. Thus,
  the number of $\ell$-cohesive groups of size $x$ whose members'
  smallest index of a commonly approved candidate is $j$, is equal to:
  \[
    \textstyle \sum_{k=1}^{\min(|L_2|,x)} { { |L_2| \choose k } \cdot { |L_1| \choose
        {x-k} } }
  \]
  This completes the proof.
\end{proof}

Similar approach shows that testing if a committee provides EJR can be
done in polynomial time for CI or VI elections (to the best of our knowledge, this is a folk
result).

\paragraph{Perfect PD in CI/VI Elections?}

\citet{complexity-of-ejr-and-pjr} have shown that for each election
and each committee size there is a committee with a nearly perfect PD,
but there are scenarios where committees with perfect PDs do not
exist. Unfortunately, this remains true even if the elections are CI
and VI at the same time.

\begin{example}
  Consider an election $E = (C,V)$, where $C =\{c_1, \ldots, c_7\}$,
  and $V = (v_1, \ldots, v_{15})$. The committee size is $k = 5$ and
  the approval sets are as follows:
  \begin{center}
    \scalebox{0.9}{
    \begin{tabular}{c|ccccccc}
      \toprule
              & $c_1$ & $c_2$& $c_3$& $c_4$& $c_5$& $c_6$& $c_7$\\
      \midrule
      $v_1$   &   1   &   -  &   -  &  -  &   -  &   -  &  -   \\
      $v_2$   &   1   &   -  &   -  &  -  &   -  &   -  &  -   \\
      $v_3$   &   1   &   1  &   -  &  -  &   -  &   -  &  -   \\
      $v_4$   &   -   &   1  &   -  &  -  &   -  &   -  &  -   \\
      $v_5$   &   -   &   1  &   1  &  -  &   -  &   -  &  -   \\ 
      $v_6$   &   -   &   -  &   1  &  -  &   -  &   -  &  -   \\
      $v_7$   &   -   &   -  &   1  &  1  &   -  &   -  &  -   \\
      $v_8$   &   -   &   -  &   -  &  1  &   -  &   -  &  -   \\
      $v_9$   &   -   &   -  &   -  &  1  &   1  &   -  &  -   \\
      $v_{10}$ &   -   &   -  &   -  &  -  &   1  &   -  &  -   \\
      $v_{11}$ &   -   &   -  &   -  &  -  &   1  &   1  &  -   \\
      $v_{12}$ &   -   &   -  &   -  &  -  &   -  &   1  &  -   \\
      $v_{13}$ &   -   &   -  &   -  &  -  &   -  &   1  &  1   \\
      $v_{14}$ &   -   &   -  &   -  &  -  &   -  &   -  &  1   \\
      $v_{15}$ &   -   &   -  &   -  &  -  &   -  &   -  &  1   \\
      \bottomrule                                                     
    \end{tabular}}
  \end{center}
  Clearly, the election is both CI and VI. We see that
  $\nicefrac{n}{k} = 3$ and, thus, for each $i \in [7]$, voters
  $v_{2i-1}, v_{2i}, v_{2i+1}$ form a cohesive group (for
  candidate~$c_i$).

  Now consider some size-$k$ committee. If it does not contain some
  candidate $c_i$, then the $1$-cohesive group
  $\{v_{2i-1}, v_{2i}, v_{2i+1}\}$ must have average satisfaction
  below $1$. Indeed, altogether members of this group give at most
  five approvals, of which three go to $c_i$. Thus, without $c_i$, the
  average satisfaction is at most $\frac{2}{3} < 1$.  However, since
  the committee size is five and there are seven candidates, for each
  committee there is some $1$-cohesive group with satisfaction below
  $1$.  Thus there is no committee with a perfect PD for this election
  and committee size five.
\end{example}

\section{Conclusions and Future Work}
We have shown that computing committees with a given proportionality
degree is, apparently, more difficult than computing EJR committees,
but verification problems for these two notions have the same
complexity. Two most natural directions of future work would be to
establish the exact complexity of the \textsc{PD-Committee} problem
and experimentally analyze PDs of committees provided by various
voting rules.

\section*{Acknowledgments}
This project has received funding from the European 
    Research Council (ERC) under the European Union’s Horizon 2020 
    research and innovation programme (grant agreement No 101002854).
    
\noindent \includegraphics[width=3cm]{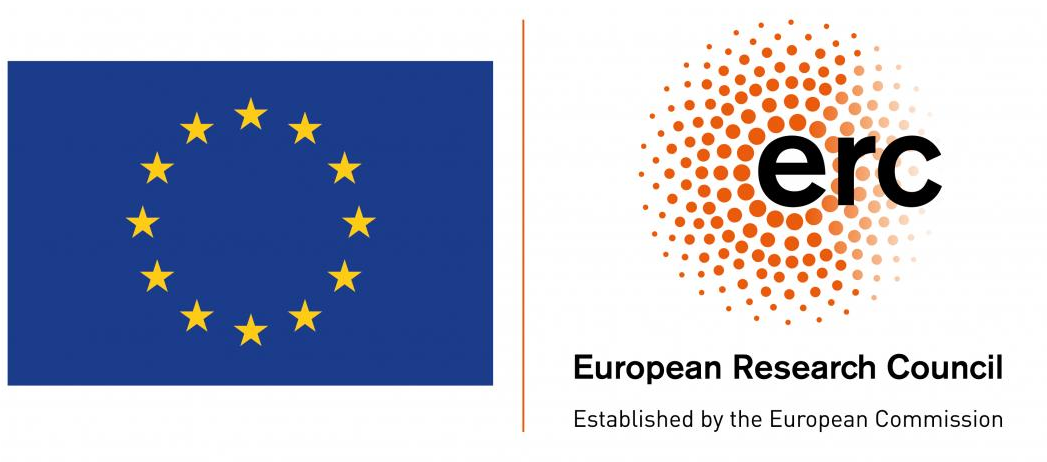}

\bibliography{bibliography}

\appendix

\section{Proof of Proposition~\ref{red:12}}
\label{app:counting}

We prove Proposition~\ref{red:12} via Lemmas~\ref{red:1}
and~\ref{red:2} below.

\begin{lemma}\label{red:1}
    \textsc{\#Cohesive-Group} $\le_{\mathrm{T}}^{\mathrm{fp}}$ \textsc{\#FSCG}
\end{lemma}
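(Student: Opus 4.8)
The plan is to exploit the fact that the size of an $\ell$-cohesive group is a single integer between $1$ and the number $n$ of voters, so the family of all $\ell$-cohesive groups partitions naturally according to group size. First I would observe that, for a fixed election $E$, committee size $k$, and cohesiveness level $\ell$, every $\ell$-cohesive group consists of exactly $x$ voters for exactly one value $x \in \{1, \ldots, n\}$. Hence the collections of groups counted by $\sharplcg(E,k,\ell,x)$ for distinct values of $x$ are pairwise disjoint, and their union is precisely the set of all $\ell$-cohesive groups counted by $\sharplcg(E,k,\ell)$. This immediately yields the identity
\[
  \sharplcg(E,k,\ell) = \sum_{x=1}^{n} \sharplcg(E,k,\ell,x).
\]
Note that the terms with $x < \lceil \ell \cdot \nicefrac{n}{k} \rceil$ simply contribute $0$, since a group that small cannot meet the size requirement of an $\ell$-cohesive group; including them in the sum is therefore harmless and lets us avoid a case distinction on the lower bound.

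The reduction then works as follows. Given an instance $(E,k,\ell)$ of \textsc{\#Cohesive-Group}, I would, for each $x \in \{1,\ldots,n\}$, make a single oracle call to \textsc{\#FSCG} on the instance $(E,k,\ell,x)$ to obtain $\sharplcg(E,k,\ell,x)$, and then output the sum of the returned values. There are exactly $n$ oracle calls, and the only additional computation is adding up $n$ integers, so the whole procedure runs in polynomial time. This establishes the desired Turing reduction $\textsc{\#Cohesive-Group} \le_{\mathrm{T}}^{\mathrm{fp}} \textsc{\#FSCG}$.

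There is essentially no hard step here: the entire content is the disjoint-union decomposition of the cohesive groups by their size, together with the observation that there are only polynomially many possible sizes (namely at most $n$), so polynomially many oracle queries suffice. The same argument goes through verbatim when $\ell$ is fixed to $1$, which is why Proposition~\ref{red:12} also applies to \textsc{\#One-FSCG} and \textsc{\#One-Cohesive-Group}.
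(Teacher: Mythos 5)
Your proof is correct and follows essentially the same approach as the paper: decompose the set of $\ell$-cohesive groups by their exact size and sum the answers of polynomially many \textsc{\#FSCG} oracle calls. The only (immaterial) difference is that you sum over all sizes $x \in \{1,\ldots,n\}$ and note that sizes below $\lceil \ell \cdot n/k \rceil$ contribute zero, whereas the paper starts the sum at that lower bound directly.
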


\begin{proof}
  Let $(E, k, \ell)$ be an instance of \textsc{\#Cohesive-Group},
  where $E$ is an election instance with $m$ candidates and $n$
  voters, $k$ is the committee size, and $\ell$ is the cohesiveness
  level.

  To count the number of $\ell$-cohesive groups, it suffices to sum up
  the number of $\ell$-cohesive groups of each possible size $x$. From
  the definition of an $\ell$-cohesive group, we know that its size
  must be at least $\ell \cdot \frac{n}{k}$. It is also clear that its size
  cannot exceed the number of voters. Thus,
  $\sharplcg(E, \ell, k) = \sum_{x = \lceil \ell \cdot \frac{n}{k} \rceil}^{n}
  {\sharplcg(E, \ell, k, x)}$. This concludes the argument.
\end{proof}

We also have a reduction in the reverse direction, thus obtaining
computational equivalence between \textsc{\#Cohesive-Group} and
\textsc{\#FSCG}.

\begin{lemma}\label{red:2}
  \textsc{\#FSCG} $\le_{\mathrm{T}}^{\mathrm{fp}}$ \textsc{\#Cohesive-Group}
\end{lemma}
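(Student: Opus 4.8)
The plan is to invert the summation identity underlying Lemma~\ref{red:1}: there the oracle's answer $\sharpcg$ was expressed as a sum of the size-indexed counts $\sharplcg(E, \ell, k, x)$ over all admissible sizes. To recover one individual count using only the $\sharpcg$ oracle, I would engineer a family of instances that all share the \emph{same} family of voter-subsets having at least $\ell$ common candidates, but differ in the minimum cohesive-group size, i.e.\ in the threshold $\lceil \ell n / k \rceil$. Querying the oracle at two consecutive thresholds and subtracting then isolates the number of cohesive groups of exactly one prescribed size.

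Concretely, given a $\sharpfscg$ instance $(E, k, \ell, x)$, I would first dispose of the trivial cases: if $x < \lceil \ell n / k \rceil$ or $x > n$, no $\ell$-cohesive group can have size $x$, so I output $0$. Otherwise I form a padded election $E'$ by adding $d = n^2$ \emph{null voters} (voters approving no candidate). Since any group containing a null voter has empty common-approval set, null voters never lie in an $\ell$-cohesive group; hence, writing $g_y$ for the number of size-$y$ voter-subsets of $E$ with at least $\ell$ common candidates, the corresponding count in $E'$ is again $g_y$ for every $y$. Setting $n' = n + d$ and $F(\tau) = \sum_{y \ge \tau} g_y$, I observe that calling the oracle on the instance $(E', k', \ell)$, with $k'$ chosen so that $\lceil \ell n' / k' \rceil = \tau$, returns exactly $F(\tau)$. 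Consequently $g_x = F(x) - F(x+1)$, and because $x \ge \lceil \ell n / k \rceil$ this value equals the desired $\sharplcg(E, \ell, k, x)$; when $x = n$ one has $F(n+1) = 0$, so a single oracle call suffices there.

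The one genuine obstacle is threshold granularity: I must be able to realize both target thresholds $x$ and $x+1$ exactly as $\lceil \ell n' / k' \rceil$ for integer committee sizes $k'$, and this is precisely what the null-voter padding buys. A threshold $\tau$ is realizable iff the interval $[\ell n' / \tau,\ \ell n' / (\tau-1))$ contains an integer; this interval has length $\ell n' / (\tau(\tau-1))$, and since $\ell n' \ge n^2 \ge \tau(\tau-1)$ for every $\tau \le n$, its length is at least one and it therefore contains an integer $k'$ (the case $\tau = 1$ is met by any $k' \ge \ell n'$). Thus both needed committee sizes exist and can be located in polynomial time by direct search over $k'$.

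Finally, I would note that the whole procedure performs at most two oracle calls on instances of size polynomial in the input together with polynomial-time bookkeeping, so it is a legitimate polynomial-time Turing reduction, establishing $\textsc{\#FSCG} \le_{\mathrm{T}}^{\mathrm{fp}} \textsc{\#Cohesive-Group}$. Combined with Lemma~\ref{red:1}, this yields the computational equivalence asserted in Proposition~\ref{red:12}.
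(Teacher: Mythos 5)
Your reduction is in essence the paper's own: pad the election with voters who approve nobody so that the minimum admissible size of a cohesive group can be tuned through the committee size, query the oracle at two committee sizes realizing the consecutive size thresholds $x$ and $x+1$, and subtract the two answers. The only real difference is how the thresholds are realized: the paper chooses the padding so that the committee sizes come out as exact divisors (it uses $n' = n x (x+1)$ voters and queries $k_1' = \ell n x$ and $k_2' = \ell n (x+1)$, giving thresholds $x+1$ and $x$ on the nose), whereas you pad with $n^2$ null voters and argue by an interval-length (pigeonhole) argument that integer committee sizes hitting the thresholds exist and can be found by search. Both versions of the threshold argument are valid, and your identity $g_x = F(x) - F(x+1)$ is the same subtraction the paper performs.

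There is, however, one technical omission you should repair: you add null voters but no null candidates. Your queried committee sizes satisfy $k' \ge \ell n'/\tau \ge n'/n = n+1$, and for $\tau = 1$ even $k' \ge \ell n'$, so in general $k'$ exceeds the number $m$ of candidates, which your padded election leaves unchanged. Under the paper's conventions a committee size must satisfy $k \le |C|$ (and the cohesiveness level must lie in $[k]$), so these oracle calls need not be well-formed instances of \textsc{\#Cohesive-Group}, and the oracle's behavior on them is undefined. This is precisely why the paper's proof also adds $n x (x+1) m - m$ candidates approved by no voter---harmless for cohesive-group counts---and then verifies $1 \le k_1' < k_2' \le m'$. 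Adding the analogous candidate padding (say, $\ell n'$ dummy candidates approved by nobody) to your construction closes the gap and makes the reduction fully correct.
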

\begin{proof}
  Let $(E, k, \ell, x)$ be an instance of \textsc{\#FSCG}, where $E$
  is an election with $m$ candidates and $n$ voters, $k$ is the size
  of the final committee, $\ell$ is the cohesiveness level, and $x$ is
  the size of the considered groups. We assume that
  $\ell \cdot \frac{n}{k} \le x \le n$ and $1 \le \ell \le m$ as
  otherwise we would immediately output zero as the answer.

  We create an election $E'$ which, initially, is a copy of $E$.
  Next, we extemed $E'$ by adding $n \cdot x \cdot (x+1) - n$ new
  voters that do not approve any candidates and
  $n \cdot x \cdot (x+1) \cdot m - m$ new candidates that are not
  approved by any voter. Thus, in $E'$ we have
  $n' = n \cdot x \cdot (x+1)$ voters and
  $m' = n \cdot x \cdot (x+1) \cdot m$ candidates. The aim of adding
  the new voters and candidates is to establish the lower bound on the
  size of the cohesive groups.

  If we were able to count the number of $\ell$-cohesive groups with
  the size greater or equal to $x$ and those with the size strictly
  greater than $x$, then the difference between these two values would
  be the number of $\ell$-cohesive groups with size $x$.

  To use the idea from the preceding paragraph, we define
  $k_1^{'} = \ell \cdot n \cdot x$ and
  $k_2^{'} = \ell \cdot n \cdot (x+1)$. It should be clear that
  $1 \le k_1^{'} < k_2^{'} \le m'$. Further, we note that
  $\ell \cdot \frac{n'}{k_1^{'}} = \ell \cdot \frac{n \cdot x \cdot
    (x+1)}{\ell \cdot n \cdot x} = x+1$ and, so,
  $\sharplcg(E', k_1^{'})$ is equal to the number of $\ell$-cohesive
  groups in $E'$ with size at least $x+1$. Similarly, as
  $\ell \cdot \frac{n'}{k_2^{'}} = \ell \cdot \frac{n \cdot x \cdot
    (x+1)}{\ell \cdot n \cdot (x+1)} = x$, we have that
  $\sharplcg(E', k_2^{'})$ is equal to the number of $\ell$-cohesive
  groups in $E'$ with size at least $x$. Furthermore, as newly created
  voters in $E'$ do not approve any candidates and newly created
  candidates in $E'$ are not approved by any voter, each
  $\ell$-cohesive group in $E'$ consists of the voters from $E$
  approving only the candidates from $E$, so it is also a valid
  $\ell$-cohesive group in~$E$. Thus,
  $\sharplcg(E, k, \ell) = \sharplcg(E', k_1^{'},\ell) - \sharplcg(E, k_2^{'},\ell)$.

  This completes the proof as we have shown a polynomial-time
  algorithm that solves \textsc{\#FSCG} using oracle access to
  \textsc{\#Cohesive-Group}.
\end{proof}

\section{Proof of Theorem~\ref{cor:pdverification}}
\label{app:pdv}

\newtheorem*{verification}{Theorem~\ref{cor:pdverification}}

\begin{verification}
  \textsc{PD-Verification} is $\conp$-complete.
\end{verification}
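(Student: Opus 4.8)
The plan is to show that \textsc{PD-Verification} is in \conp{} and \conp-hard separately. For membership, observe that $W$ \emph{fails} to have proportionality degree $f$ exactly if there is some cohesiveness level $\ell \in [k]$ and some $\ell$-cohesive group $X$ whose average satisfaction for $W$ is strictly below $f(\ell)$. This failure condition is an \np{} predicate: a nondeterministic machine can guess the pair $(\ell, X)$, verify in polynomial time that $X$ is indeed an $\ell$-cohesive group, and check that its average satisfaction falls below $f(\ell)$. Hence the complement of \textsc{PD-Verification} is in \np, which places \textsc{PD-Verification} itself in \conp.

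For hardness, I would reduce from the complement of \pdf, which is \conp-complete since \pdf{} is \np-complete by Theorem~\ref{thm:pdfailure}. The two problems are already almost identical: given a \pdf{} instance $(E, W, \ell, y)$ asking whether some $\ell$-cohesive group has average satisfaction below $y$, I want to build a \textsc{PD-Verification} instance $(E, W, f)$ that answers ``yes'' (i.e., $W$ has PD $f$) exactly when no such group exists. The natural choice is to set $f(\ell) = y$ at the given level and to set $f(i)$ to a vacuously satisfiable value for every other $i \neq \ell$. Concretely, for $i \neq \ell$ one can set $f(i) = 0$, so that the constraint at level $i$ is automatically met (average satisfaction is always nonnegative). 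Then $W$ has PD $f$ if and only if every $\ell$-cohesive group has average satisfaction at least $y$, which is precisely the negation of the \pdf{} instance being a ``yes''-instance. This gives \conp-hardness.

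The main subtlety to get right is the strict-versus-nonstrict inequality at level $\ell$. \pdf{} asks for a group with average satisfaction \emph{strictly below} $y$, whereas the PD condition requires average satisfaction to be \emph{at least} $f(\ell)$. With $f(\ell) = y$ these are exact complements: no group strictly below $y$ is the same as all groups at least $y$, so the boundary case lines up correctly and no perturbation of the threshold is needed. One should also confirm that $f$ is a syntactically valid input, i.e., a rational-valued function on $[k]$ listed explicitly, which it is since $y$ is a rational threshold. I do not anticipate a serious obstacle here; the reduction is essentially an identity transformation exploiting that \textsc{PD-Verification} with a single nontrivial level collapses to complementary \pdf, and the only care required is the inequality bookkeeping just described.
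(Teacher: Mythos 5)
Your proposal is correct and matches the paper's own proof essentially step for step: membership in $\conp$ by guessing a cohesiveness level and a failing cohesive group, and hardness via the near-identity reduction from \pdf{} (equivalently, from its complement) using the PD function with $f(\ell) = y$ and $f(i) = 0$ for $i \neq \ell$, with the same observation that the strict/non-strict inequalities are exact complements. No gaps; this is the same argument.
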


\begin{proof}
  Let us show that $\overline{\pdv}$ is in $\np$.  Given a
  $\overline{\pdv}$ instance, we guess a group of voters and a
  cohesiveness level $\ell$. We can verify in polynomial time whether
  these voters form an $\ell$-cohesive group. Then we calculate the
  average satisfaction of the group and compare it with the given
  threshold. If the voters form an $\ell$-cohesive group and their
  average satisfaction is lower than the threshold, then the selected
  voters witness $\overline{\pdv}$. Therefore $\overline{\pdv}$ is in
  $\np$ and $\pdv$ is in $\conp$.

  We show a reduction from the $\pdf$ problem to the complement of the
  $\pdv$ problem.  Let $(E, W, k, \ell, y)$ be a $\pdf$ instance,
  where $E$ is an election, $W$ is a final committee of size $k$,
  $\ell$ is the cohesiveness level, and $y$ is a nonnegative real
  threshold $y \le k$. We ask if there exists an $\ell$-cohesive group
  whose average satisfaction is lower than $y$.

  We create a $\pdv$ instance as follows. We have the same election
  $E$ and the same committee $W$. We set the PD function to be
  $f(\ell) = y$, and $0$ otherwise.

  Suppose that the answer for the $\pdv$ instance is ``no''. Then, for
  some $\ell'$ there exists an $\ell'$-cohesive group $S$ whose
  average satisfaction is lower than $f(\ell')$. It is quite clear
  that each $\ell'$-cohesive group has average satisfaction at least
  $0$, regardless of a selected committee. It means that $\ell'$ must
  be equal to $\ell$. Therefore $S$ is an $\ell$-cohesive group and
  has average satisfaction lower than $f(\ell') = f(\ell) = y$. Thus
  the answer for the $\pdf$ instance is ``yes''.

  Suppose that the answer for the $\pdv$ instance is ``yes''. Then,
  for each $\ell'$ and each $\ell'$-cohesive group its average
  satisfaction is at least $f(\ell')$. In particular, it means that
  each $\ell$-cohesive group has average satisfaction at least
  $f(\ell) = y$. Therefore there does not exist an $\ell$-cohesive
  group that has average satisfaction lower than $y$. Thus the answer
  for the $\pdf$ instance is ``no''.

  Since the $\pdv$ problem is in $\conp$, the $\pdf$ problem is
  $\np$-complete, and we reduced the $\pdf$ problem to the complement
  of the $\pdv$ problem, the $\pdv$ problem is $\conp$-complete.
\end{proof}

\end{document}